\documentclass[amsmath,amssymb,reprint,floatfix]{revtex4-2}

\usepackage[T1]{fontenc}
\usepackage[utf8]{inputenc}
\usepackage{microtype}
\usepackage{mathtools}
\usepackage{amsthm}
\usepackage{bm}
\usepackage{booktabs}
\usepackage{array}
\usepackage{enumitem}
\usepackage{hyperref}
\usepackage{xcolor}

\hypersetup{colorlinks=true,linkcolor=blue,citecolor=blue,urlcolor=blue}
\allowdisplaybreaks
\newtheorem{theorem}{Theorem}[section]

\newtheorem{corollary}[theorem]{Corollary}
\theoremstyle{definition}
\newtheorem{definition}[theorem]{Definition}

\newcommand{\dd}{\mathrm{d}}

\begin{document}

\title{Spectral Invariance for Black-Hole Differential Observables}
\author{Vedant Subhash}
\affiliation{Department of Mathematics, University at Buffalo, The State University of New York, Buffalo, New York 14260, USA}
\makeatletter\def\Dated@name{}\makeatother\date{}

\begin{abstract}
We study first-order differential quantities built from solutions of second-order black-hole wave equations. These quantities can satisfy new differential equations that contain extra singular points, even when the original physical solutions remain regular. We show that an isolated zero of the transformation determinant can produce such an apparent singularity, but it does not by itself create a new resonance or quasinormal-mode frequency. We then study the corresponding Green functions and show that the physical transformed response has the same spectral poles as the original problem, provided that the transformation does not become globally degenerate. For the Regge--Wheeler and Zerilli equations, this means that ordinary turning points can appear as singular points of the derivative equation without changing the nonzero-frequency quasinormal-mode spectrum. The zero-frequency case is different and must be studied separately.

\end{abstract}

\keywords{black-hole perturbations; quasinormal modes; differential transformations; Green functions; apparent singularities}

\maketitle

\section{Introduction}
\label{sec:intro}

Black-hole perturbation theory often reduces a field equation to a
second-order ordinary differential equation.  For a Schwarzschild black
hole, the odd- and even-parity gravitational perturbations are described
by the Regge--Wheeler and Zerilli equations
\cite{ReggeWheeler1957,Zerilli1970a,Zerilli1970b}.
Gauge-invariant formulations of the same problem were developed later,
and the relations between the different master variables are now well
understood
\cite{Moncrief1974,Chandrasekhar1975,Chandrasekhar1983}.
For curvature perturbations one instead obtains the Bardeen--Press
equation in Schwarzschild spacetime and the Teukolsky equation in Kerr
spacetime
\cite{BardeenPress1973,Teukolsky1972,Teukolsky1973,
PressTeukolsky1973}.
These equations remain basic tools in black-hole perturbation theory.

The radial master function is not always the final quantity of physical
interest.  Metric reconstruction, curvature quantities, flux variables,
and short-range reformulations can contain both a radial function and
its derivative.  It is therefore natural to study a first-order
differential observable of the form
\begin{equation}
 z=T(\omega)y
   =A(x,\omega)y+B(x,\omega)y',
 \label{eq:introT}
\end{equation}
where the original function satisfies
\begin{equation}
 L(\omega)y
 =
 y''+p(x,\omega)y'+q(x,\omega)y=0.
 \label{eq:introL}
\end{equation}
The functions $A$ and $B$ may depend on the spectral parameter
$\omega$.  The map $T$ does not have to be a Darboux transformation,
and in this paper we will not assume that it intertwines two operators.

Differential transformations already play an important role in
black-hole perturbation theory.  Chandrasekhar found relations between
the Regge--Wheeler, Zerilli, and Bardeen--Press equations
\cite{Chandrasekhar1975,Chandrasekhar1983}.
The mathematical structure behind the Schwarzschild transformations has
been studied from several points of view
\cite{AndersonPrice1991,Glampedakis2017}.
For Kerr perturbations, the Sasaki--Nakamura equation was introduced to
replace the long-range Teukolsky radial equation by an equation with
better asymptotic properties
\cite{SasakiNakamura1982}.
Other useful approaches include analytic series representations of
Teukolsky solutions and later studies of Sasaki--Nakamura-type
transformations
\cite{ManoSuzukiTakasugi1996,Nakano2016}.
Whiting used related transformation ideas in his proof of mode stability
for the Kerr black hole
\cite{Whiting1989}.

The present problem is more general.  We are not asking when two
Schr\"odinger operators are Darboux partners.  Instead, we start with a
given physical observable $T=A+B\partial_x$ and ask what happens to the
differential equation, its boundary solutions, and its spectral data
after applying this map.

The first issue is already visible at the local level.  If
\begin{equation}
 \begin{pmatrix}
 z\\ z'
 \end{pmatrix}
 =
 M(x,\omega)
 \begin{pmatrix}
 y\\ y'
 \end{pmatrix},
\end{equation}
then the determinant of the first-jet transformation is
\begin{equation}
 D(x,\omega)
 =
 A(A+B'-Bp)-B(A'-Bq).
 \label{eq:introD}
\end{equation}
Away from $D=0$, the transformation can be inverted locally and $z$
satisfies another second-order scalar equation.  At a zero of $D$, the
coefficients of this transformed scalar equation can become singular
even when the original equation is regular.

This phenomenon is not new by itself.  Equations satisfied by
derivatives of Heun and confluent-Heun functions can contain an
additional singular point
\cite{Shahnazaryan2012,Filipuk2020}.
More generally, apparent singularities arise when a first-order system
is reduced to a scalar equation by choosing a cyclic vector.  This is a
standard part of the theory of linear differential equations,
isomonodromic systems, and Painlev\'e equations
\cite{DubrovinMazzocco2007,vanderPutSaito2009}.
For this reason, the appearance of an additional singular point after
differentiation is not claimed as a new result here.

A recent black-hole example gives a direct motivation for the present
work.  Aly and Stojkovic studied the radial Teukolsky equation in
horizon-penetrating Kerr coordinates and found that the differential
equation for the first radial derivative contains an additional regular
singular point
\cite{AlyStojkovic2023}.
Their reconstructed quantities depend on both the radial solution and
its derivative.  This raises a simple question: does a singular point of
the equation satisfied by the derivative represent a genuine physical
or spectral singularity, or is it only a feature of the transformed
scalar equation?

To answer this question, one must distinguish several different
objects.  A singular point of an ordinary differential equation is not
the same as a singularity of one of its solutions.  A turning point is
not automatically a pole of a Green function.  A zero of a
position-dependent Wronskian is not automatically a quasinormal-mode
condition.  Most importantly for this paper, a zero
\begin{equation}
 D(x_0,\omega_0)=0
\end{equation}
at one value of the radial coordinate is not obviously related to a
global spectral condition in the frequency variable.

This distinction becomes especially clear for the derivative map
\begin{equation}
 T=\partial_x.
\end{equation}
In this case
\begin{equation}
 D=q.
\end{equation}
For a Schr\"odinger-type equation
\begin{equation}
 \Psi''+Q(x,\omega)\Psi=0,
\end{equation}
we have
\begin{equation}
 D=Q.
\end{equation}
The ordinary turning points
\begin{equation}
 Q(x_0,\omega)=0
\end{equation}
therefore become singular points of the scalar equation for
$X=\Psi'$.  The local question is whether these singularities are
apparent.  The global question, which is more important here, is
whether they can change the resonance or quasinormal-mode spectrum.

The relation with Darboux theory must also be treated carefully.
Darboux transformations go back to the classical theory of linear
differential equations, and their spectral properties were developed
further in the Sturm--Liouville and supersymmetric quantum-mechanics
literature
\cite{Darboux1882,Crum1955,Sukumar1985,
CooperKhareSukhatme1995}.
For a true Darboux transformation, the transformation operator is chosen
so that the two differential operators satisfy an intertwining
relation.  This extra condition gives strong relations between their
solution spaces and spectra.

Open systems require additional care because their boundary conditions
are not the usual square-integrable boundary conditions.  Supersymmetric
transformations of quasinormal and outgoing-wave problems were studied
in Refs.~\cite{Leung1999,Leung2001}, including the behavior of
exceptional modes.  The algebraically special Schwarzschild frequency
also gives an important example where the transformation itself becomes
special
\cite{Maassen2000}.  Green functions of genuine Darboux or
supersymmetric partner operators obey corresponding transformation
relations
\cite{SamsonovSukumarPupasov2005,SamsonovPupasov2005}.

These results do not imply that an arbitrary map of the form
$A+B\partial_x$ preserves a quasinormal-mode problem.  In particular,
Yurov and Yurov showed that generalized first-order Darboux-type
transformations need not be isospectral when the transformed problem is
assigned boundary conditions independently of the original ones
\cite{Yurov2019}.  This point is important for the present paper.  We
will not assume that a differential substitution automatically
preserves the spectrum.  Instead, we follow the actual images of the
left and right physical boundary solutions under $T$.

Quasinormal modes are global objects.  Early studies of Schwarzschild
quasinormal modes already showed that they are fixed by conditions at
both the horizon and infinity
\cite{ChandrasekharDetweiler1975}.
Leaver later developed analytic and continued-fraction methods for Kerr
and Schwarzschild perturbations
\cite{Leaver1985,Leaver1986}.
The mathematical and physical theory of black-hole quasinormal modes
has since become a large subject
\cite{KokkotasSchmidt1999,BertiCardosoStarinets2009,
KonoplyaZhidenko2011}.
For our purpose, the important point is that a quasinormal frequency is
determined by a global boundary-value problem, not by the vanishing of a
coefficient at one radial point.

The same idea appears in the language of Jost and Evans functions.  An
Evans function is an analytic determinant constructed from the
solution subspaces selected by the boundary conditions, and its zeros
locate spectral values \cite{AlexanderGardnerJones1990}.  Its relation
with Jost functions and Fredholm determinants has also been studied in
detail \cite{GesztesyLatushkinMakarov2007}.
This language is useful here because it separates a global spectral
zero from a local loss of rank in the transformation matrix.

There is also a rigorous resonance theory for one-dimensional
Schr\"odinger operators and black-hole wave equations.  Scattering
resonances on the real line can be described through meromorphic
continuation of the outgoing resolvent
\cite{Zworski1987}.  For black holes, resonance distributions and
resolvent methods were developed for spherically symmetric backgrounds
and later for rotating black holes
\cite{SaBarretoZworski1997,BonyHafner2008,
Dyatlov2011,Dyatlov2012}.
Related microlocal work gives a precise spectral meaning to
quasinormal modes in Kerr--de Sitter and other black-hole settings
\cite{Vasy2013}.
These results provide the global spectral framework used in the present
paper.

The aim of this work is to connect this global theory with the local
transformation
\begin{equation}
 T=A+B\partial_x
\end{equation}
without assuming a Darboux intertwining relation.  The central question
is whether an isolated radial zero of $D(x,\omega)$ can produce a new
spectral pole.

Our results have three main parts.  First, we derive the transformed
second-order equation and show that the determinant $D$ controls both
the transformed Wronskian and the local invertibility of the first-jet
map.  An isolated zero of $D$ can make the transformed scalar equation
singular, but a genuine loss of a homogeneous solution occurs only when
$D(\cdot,\omega)$ vanishes identically.

Second, we study the global boundary-value problem.  When the physical
left and right boundary lines are carried through the transformation,
the transformed Evans determinant differs from the original one only by
a nonzero holomorphic factor.  Isolated radial zeros of $D$ therefore do
not create resonances or quasinormal-mode frequencies.  We then show
that the pushed-forward response $T(\omega)R_L(\omega)$ has the same pole
orders as the original outgoing inverse away from global transformation
degeneracy.

Third, we compare the physical Green response with the unit-source
Green function of the transformed scalar equation.  The latter can
contain an explicit $1/D$ factor.  We prove an exact identity showing
that this apparent singular factor cancels once the source is
transformed correctly.  We then apply the general results to the
Regge--Wheeler and Zerilli equations and return briefly to the
Teukolsky problem that motivated the analysis.

The paper is organized as follows.  Section~\ref{sec:general} develops
the first-order transformation and its determinant.  Section~\ref{sec:local}
studies the local singularities produced by zeros of $D$.
Section~\ref{sec:global} gives the global spectral theorem, and
Sec.~\ref{sec:resolvent} gives the resolvent and Green-function results.
The remaining sections classify the exceptional cases, apply the theory
to Schwarzschild perturbations, compare it with Darboux and
Chandrasekhar transformations, and discuss the Teukolsky application.
Longer algebraic checks are kept in the appendices.


\section{First-order differential observables}
\label{sec:general}

We begin with the second-order equation
\begin{equation}
    L(\omega)y
    \equiv
    y''+p(x,\omega)y'+q(x,\omega)y
    =0.
    \label{eq:original}
\end{equation}
The independent variable is denoted by $x$, and $\omega$ is the
spectral parameter.  For the local calculations in this section,
the dependence on $\omega$ will usually be left implicit.

We study a first-order differential observable
\begin{equation}
    z=T y
    =
    A(x)y+B(x)y'.
    \label{eq:Tdef}
\end{equation}
We do not assume that $T$ is a Darboux transformation.
In particular, we do not assume an intertwining relation of the form
$\widetilde L T=T L$.

Define
\begin{equation}
    C=A'-Bq,
    \qquad
    E=A+B'-Bp.
    \label{eq:CE}
\end{equation}
Using the original equation to replace $y''$, we obtain
\begin{equation}
    z'=Cy+Ey'.
\end{equation}
Therefore
\begin{equation}
    \begin{pmatrix}
        z\\
        z'
    \end{pmatrix}
    =
    M(x)
    \begin{pmatrix}
        y\\
        y'
    \end{pmatrix},
    \qquad
    M(x)=
    \begin{pmatrix}
        A&B\\
        C&E
    \end{pmatrix}.
    \label{eq:matrixmap}
\end{equation}

The determinant of this first-jet map is
\begin{align}
    D(x)
    &=
    \det M
    \nonumber\\
    &=
    A(A+B'-Bp)-B(A'-Bq)
    \nonumber\\
    &=
    A^2+AB'-ABp-BA'+B^2q.
    \label{eq:Ddef}
\end{align}
The quantity $D$ will play two different roles.
Locally, it decides whether the first jet $(y,y')$ can be recovered
from $(z,z')$.  Globally, it will also tell us when $T$ loses an
entire homogeneous solution.

For later use define
\begin{equation}
    F=C'-Eq,
    \qquad
    G=C+E'-Ep.
    \label{eq:FG}
\end{equation}
Then
\begin{equation}
    z''=Fy+Gy'.
\end{equation}

\subsection{The transformed second-order equation}

\begin{theorem}[Transformed equation]
\label{thm:transformed}
On every open set on which $D\neq0$, the function $z=Ty$ satisfies
\begin{equation}
    \widetilde L z
    \equiv
    z''+\widetilde p\,z'+\widetilde q\,z
    =0,
    \label{eq:tildeL}
\end{equation}
where
\begin{equation}
    \boxed{
    \widetilde p
    =
    \frac{FB-GA}{D}
    =
    p-\frac{D'}{D}}
    \label{eq:ptilde}
\end{equation}
and
\begin{equation}
    \boxed{
    \widetilde q
    =
    \frac{GC-FE}{D}.}
    \label{eq:qtilde}
\end{equation}
The inverse first-jet relation is
\begin{equation}
    \begin{pmatrix}
        y\\
        y'
    \end{pmatrix}
    =
    \frac{1}{D}
    \begin{pmatrix}
        E&-B\\
        -C&A
    \end{pmatrix}
    \begin{pmatrix}
        z\\
        z'
    \end{pmatrix}.
    \label{eq:inverse}
\end{equation}
\end{theorem}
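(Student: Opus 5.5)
The proof works entirely with the three jet relations already derived: $z=Ay+By'$, $z'=Cy+Ey'$ and $z''=Fy+Gy'$. On an open set where $D=\det M\neq0$, the matrix $M$ is invertible, and Cramer's rule gives \eqref{eq:inverse} at once. This expresses $(y,y')$ as smooth functions of $(z,z')$ with coefficients $E/D,-B/D,-C/D,A/D$. Substituting into the relation for $z''$ gives
\begin{equation*}
z''=\frac{FE-GC}{D}\,z+\frac{GA-FB}{D}\,z',
\end{equation*}
which is precisely $z''+\widetilde p z'+\widetilde q z=0$ with $\widetilde p=(FB-GA)/D$ and $\widetilde q=(GC-FE)/D$. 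So the first boxed formula and \eqref{eq:qtilde} are just linear algebra.

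The only real content is the second equality $\widetilde p=p-D'/D$, which I would prove by a Wronskian argument rather than brute expansion. Take two solutions $y_1,y_2$ of \eqref{eq:original}. Then $W_y'=-pW_y$ by Abel's identity. Since the jet map is linear with matrix $M$, the transformed Wronskian is $W_z=\det M\cdot W_y=D\,W_y$. On the other hand, $z_1,z_2$ both solve $\widetilde L z=0$, so Abel's identity applied to the transformed equation gives $W_z'=-\widetilde p\,W_z$. Choosing $y_1,y_2$ independent makes $W_y\neq0$, and then
\begin{equation*}
-\widetilde p=\frac{W_z'}{W_z}=\frac{D'}{D}+\frac{W_y'}{W_y}=\frac{D'}{D}-p,
\end{equation*}
which is the claim. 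Independent solutions exist locally by standard ODE theory, and $W_z=DW_y$ is nonzero on the set where $D\neq0$.

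As a cross-check, one could verify $FB-GA=pD-D'$ directly. Expanding $F=C'-Eq$ and $G=C+E'-Ep$ gives $FB-GA=BC'-BEq-AC-AE'+AEp$. Differentiating $D=AE-BC$ gives $D'=A'E+AE'-B'C-BC'$. Substituting $C=A'-Bq$ and $E=A+B'-Bp$, the mixed terms cancel. This computation is routine but error-prone, so I would present it only as a remark.

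The one subtle point is that $\widetilde p$ and $\widetilde q$ must be independent of the particular solution $y$. This holds because they are built only from $A,B,p,q$ and their derivatives. The Wronskian argument is therefore legitimate, since it uses a single equation satisfied by all $z=Ty$.
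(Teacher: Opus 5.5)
Your proposal is correct. Its first half, inverting $M$ by Cramer's rule and substituting into $z''=Fy+Gy'$, is exactly the paper's proof.

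For $\widetilde p=p-D'/D$ you take a different route. The paper gets it by differentiating $D=AE-BC$ directly to obtain $FB-GA=pD-D'$. It mentions the Abel argument only afterwards, as a consistency check after the Wronskian theorem; you make that Abel argument the proof. This is not circular, because $W[Ty_1,Ty_2]=D\,W[y_1,y_2]$ follows from $\det(MY)=\det M\,\det Y$ alone and does not use the transformed equation.

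Your argument in fact yields the coefficient identity itself. Combining $W_z'=-\widetilde p\,D\,W_y$ with $W_z'=(D'-pD)W_y$ gives $\widetilde p\,D=pD-D'$, i.e.\ $FB-GA=pD-D'$, wherever $D\neq0$.

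The Abel route explains why $D'/D$ appears: the transformed Wronskian is rescaled by $D$. It also avoids the algebra. The cost is that it needs a fundamental system and proves the identity only off the zero set of $D$, extended by continuity. The paper's computation, by contrast, is a pointwise differential identity valid everywhere. That computation is also less error-prone than you suggest: $FB-GA-(pD-D')$ collapses at once to $E(A'-Bq)-C(A+B'-Bp)=EC-CE=0$.
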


\begin{proof}
From \eqref{eq:matrixmap},
\begin{equation}
    y=\frac{Ez-Bz'}{D},
    \qquad
    y'=\frac{-Cz+Az'}{D}.
\end{equation}
Substituting these expressions into
$z''=Fy+Gy'$ gives
\begin{equation}
    z''
    =
    \frac{FE-GC}{D}z
    +
    \frac{-FB+GA}{D}z'.
\end{equation}
This proves the transformed equation.
A direct differentiation of $D$ gives
\begin{equation}
    FB-GA=pD-D',
\end{equation}
which gives the second form in \eqref{eq:ptilde}.
\end{proof}

The poles in \eqref{eq:ptilde} and \eqref{eq:qtilde} show why zeros of
$D$ matter.  They can make the scalar equation for $z$ singular even
when the original equation for $y$ is regular.

\subsection{Wronskian transformation}

\begin{theorem}[Wronskian identity]
\label{thm:wronskian}
For any two homogeneous solutions $y_1,y_2$,
\begin{equation}
    \boxed{
    W[Ty_1,Ty_2]
    =
    D(x)W[y_1,y_2].}
    \label{eq:Wlaw}
\end{equation}
\end{theorem}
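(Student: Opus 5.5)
The identity follows directly from the first-jet map \eqref{eq:matrixmap} and multiplicativity of determinants. For each homogeneous solution $y_j$ we have $(Ty_j,(Ty_j)')^{\mathsf T}=M(x)\,(y_j,y_j')^{\mathsf T}$. This uses only the original equation \eqref{eq:original} to eliminate $y_j''$, and it does not require $D\neq0$.

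Placing the two columns side by side gives the matrix identity
\begin{equation}
\begin{pmatrix} Ty_1 & Ty_2\\ (Ty_1)' & (Ty_2)'\end{pmatrix}
= M(x)\begin{pmatrix} y_1 & y_2\\ y_1' & y_2'\end{pmatrix}.
\end{equation}
Taking determinants of both sides, the left side is $W[Ty_1,Ty_2]$. The right side is $\det M\cdot W[y_1,y_2]=D(x)\,W[y_1,y_2]$ by the definition \eqref{eq:Ddef}.

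The argument holds pointwise wherever $A$, $B$, $p$, $q$ are differentiable. In particular it holds at zeros of $D$, unlike Theorem~\ref{thm:transformed}.

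There is no real obstacle. The only point to state carefully is that $z'=Cy+Ey'$ is valid for every solution of $L y=0$, so the same matrix $M$ acts on both columns.

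As a consistency check, one could differentiate the identity. Abel's formula gives $W[y_1,y_2]'=-pW$, and the transformed equation gives $W[z_1,z_2]'=-\widetilde p\,W[z_1,z_2]$ on $\{D\neq0\}$. Together these reproduce $\widetilde p=p-D'/D$ as in \eqref{eq:ptilde}. This confirms the result but is not needed for the proof.
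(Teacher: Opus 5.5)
Your proof is correct and is the same as the paper's: place the two first-jet columns side by side, apply $M(x)$ to both, and use multiplicativity of the determinant to get $W[Ty_1,Ty_2]=D(x)\,W[y_1,y_2]$. Your two side remarks also match the paper: the identity needs no assumption $D\neq0$, and the Abel-formula check recovers $\widetilde p=p-D'/D$, which the paper states right after the theorem.
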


\begin{proof}
Apply \eqref{eq:matrixmap} to the two solution columns and take the
determinant:
\begin{equation}
    \det
    \begin{pmatrix}
        Ty_1&Ty_2\\
        (Ty_1)'&(Ty_2)'
    \end{pmatrix}
    =
    \det M
    \det
    \begin{pmatrix}
        y_1&y_2\\
        y_1'&y_2'
    \end{pmatrix}.
\end{equation}
This is exactly \eqref{eq:Wlaw}.
\end{proof}

Abel's identity gives a useful check.  Since
\begin{equation}
    W_y'=-pW_y,
    \qquad
    W_z'=-\widetilde pW_z,
\end{equation}
and $W_z=DW_y$, one again finds
\begin{equation}
    \widetilde p=p-\frac{D'}{D}.
\end{equation}

\subsection{An isolated zero and a global kernel}

An isolated zero of $D$ makes the local inverse
\eqref{eq:inverse} singular.  It does not necessarily mean that
the transformation loses one of the two global solutions.

This distinction is important for the spectral problem.

\begin{theorem}[Global kernel criterion]
\label{thm:global-kernel}
Let the original equation be regular on a connected interval or
simply connected domain.  Then
\begin{equation}
    \boxed{
    \ker\!
    \left(
       T|_{\ker L}
    \right)
    \neq\{0\}
    \quad\Longleftrightarrow\quad
    D(\cdot,\omega)\equiv0.}
    \label{eq:kerneliff}
\end{equation}
Therefore an isolated zero
\begin{equation}
    D(x_0,\omega)=0,
    \qquad
    D(\cdot,\omega)\not\equiv0,
\end{equation}
is only a local loss of invertibility of the first-jet map.
\end{theorem}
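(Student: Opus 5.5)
The plan is to prove the two implications in \eqref{eq:kerneliff} separately, using the Wronskian identity of Theorem~\ref{thm:wronskian} as the main tool in both. Throughout I assume that the coefficients $p,q$ and $A,B$ are holomorphic in $x$ on the given domain, or real-analytic on the interval. This is the situation for all the black-hole equations considered later, and I would state it as a standing hypothesis. I also use that, since $L$ is regular on a connected domain, the Wronskian of a fundamental pair never vanishes there, by Abel's formula $W_y'=-pW_y$.

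For the forward direction ($\Rightarrow$), suppose $y_0\neq0$ solves $Ly_0=0$ and $Ty_0\equiv0$. Complete $y_0$ to a fundamental pair $(y_0,y_1)$, so that $W[y_0,y_1]$ has no zeros on the domain. Theorem~\ref{thm:wronskian} then gives
\begin{equation*}
 0=W[0,Ty_1]=W[Ty_0,Ty_1]=D(x)\,W[y_0,y_1].
\end{equation*}
Dividing by the nowhere-vanishing Wronskian yields $D(\cdot,\omega)\equiv0$. This step is immediate.

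For the converse ($\Leftarrow$), suppose $D\equiv0$. Take a fundamental pair $y_1,y_2$ and set $z_j=Ty_j$. Theorem~\ref{thm:wronskian} gives $W[z_1,z_2]\equiv0$.
\begin{itemize}
 \item If $z_1\equiv0$, then $y_1$ itself is a nonzero element of the kernel, and we are done.
 \item Otherwise $z_1$ is analytic and not identically zero, so its zeros are isolated. On the complement of these zeros, $(z_2/z_1)'=W[z_1,z_2]/z_1^2=0$, so $z_2=c\,z_1$ on a connected open subset. The identity theorem extends this to $z_2=c\,z_1$ on the whole domain. Then $y=y_2-c\,y_1$ is nonzero, because $y_1,y_2$ are independent, and $Ty=0$ by linearity of $T$.
\end{itemize}

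The main obstacle is this converse step. A vanishing Wronskian does not in general force linear dependence, which is why analyticity is used: it lets the local proportionality $z_2=cz_1$ be propagated across the zeros of $z_1$. If one wanted to avoid analyticity, I would instead try a pointwise-rank argument. At each $x$, $M(x)$ has rank at most one, and I would try to show directly that the kernel line of $M(x)$ is transported by the first-order system for $(y,y')$. I expect that version to need extra nondegeneracy assumptions, such as $B\neq0$, so I would present the analytic argument as the main proof. The final sentence of the theorem then follows by contraposition. If $D(x_0,\omega)=0$ but $D(\cdot,\omega)\not\equiv0$, then $T|_{\ker L}$ is injective, so both homogeneous solutions survive. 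The only effect of the zero is the singularity of the inverse \eqref{eq:inverse} at $x_0$.
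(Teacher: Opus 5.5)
Your proof is correct and follows the paper's route: the Wronskian identity $W[Ty_1,Ty_2]=D\,W[y_1,y_2]$, together with a nowhere-vanishing original Wronskian, is used in both directions. The only difference is in the converse, where the paper simply asserts that $W[Ty_1,Ty_2]\equiv0$ makes $Ty_1$ and $Ty_2$ linearly dependent, whereas you justify this step with analyticity and the identity theorem; that justification is genuinely needed, because $\widetilde L$ does not exist when $D\equiv0$, and for merely $C^\infty$ coefficients in which $A$ and $B$ share flat zeros the converse can actually fail.
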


\begin{proof}
Suppose first that a nonzero homogeneous solution $y_1$ satisfies
$Ty_1=0$.  Choose an independent homogeneous solution $y_2$.
The Wronskian identity gives
\begin{equation}
    0
    =
    W[Ty_1,Ty_2]
    =
    D\,W[y_1,y_2].
\end{equation}
The original Wronskian is nonzero on the connected regular domain.
Hence
\begin{equation}
    D\equiv0.
\end{equation}

Conversely, suppose $D\equiv0$.  For a basis $y_1,y_2$ of the original
solution space,
\begin{equation}
    W[Ty_1,Ty_2]\equiv0.
\end{equation}
The two transformed functions are therefore linearly dependent.
Hence some nonzero constant linear combination
$c_1y_1+c_2y_2$ is killed by $T$.
\end{proof}

This result will be used again in the resolvent section.
It shows that the condition
\begin{equation}
    D(x_0,\omega_0)=0
\end{equation}
at one point is very different from
\begin{equation}
    D(\cdot,\omega_0)\equiv0.
\end{equation}

\subsection{Source transformation}

The inhomogeneous equation is
\begin{equation}
    Ly=S.
    \label{eq:inhom-original}
\end{equation}
For Green functions, it is not enough to transform only the
homogeneous solutions.  The source must also be transformed.

\begin{theorem}[Source identity]
\label{thm:sourcepush}
On every open set where $D\neq0$,
\begin{equation}
    \boxed{
    \widetilde L T
    =
    \mathcal S L,}
    \label{eq:operatoridentity}
\end{equation}
where
\begin{equation}
    \boxed{
    \mathcal S
    =
    B\partial_x
    +
    A+2B'
    -
    B\frac{D'}{D}.}
    \label{eq:Sdef}
\end{equation}
Thus, if $Ly=S$ and $z=Ty$,
\begin{equation}
    \boxed{
    \widetilde Lz
    =
    BS'
    +
    \left(
      A+2B'
      -
      B\frac{D'}{D}
    \right)S.}
    \label{eq:sourcepush}
\end{equation}
\end{theorem}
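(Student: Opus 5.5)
The plan is to verify the operator identity \eqref{eq:operatoridentity} directly on an arbitrary smooth function $y$, not only on homogeneous solutions. I set $S\equiv Ly$ and use it to eliminate second derivatives, so that $y''=S-py'-qy$. Both sides of \eqref{eq:operatoridentity} are third-order operators. After this substitution each side becomes a combination of $y$, $y'$, $S$ and $S'$, and the proof reduces to comparing coefficients.

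\emph{Step 1: the derivatives of $z$ with the source kept.} With $z=Ay+By'$ and the definitions \eqref{eq:CE}, the computation preceding \eqref{eq:matrixmap} is repeated, but now the $y''$ term contributes an extra piece:
\begin{equation}
 z'=Cy+Ey'+BS .
\end{equation}
Differentiating once more uses $\tfrac{\dd}{\dd x}(Cy+Ey')=C'y+(C+E')y'+Ey''$. Inserting $y''=S-py'-qy$ and the definitions \eqref{eq:FG} gives
\begin{equation}
 z''=Fy+Gy'+(E+B')S+BS' .
\end{equation}

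\emph{Step 2: the homogeneous part cancels.} Substituting into $\widetilde L z=z''+\widetilde p z'+\widetilde q z$ splits the result into two parts. The first is a source-free part,
\begin{equation}
 (F+\widetilde pC+\widetilde qA)\,y+(G+\widetilde pE+\widetilde qB)\,y' ,
\end{equation}
and the second is a source part $BS'+(E+B'+\widetilde pB)S$. I claim both coefficients in the source-free part vanish identically where $D\neq0$. The cleanest argument is that the pair $(\widetilde p,\widetilde q)$ of Theorem~\ref{thm:transformed} is exactly the solution, by Cramer's rule with determinant $D$, of the $2\times2$ linear system
\begin{equation}
 \widetilde q A+\widetilde p C=-F,\qquad \widetilde q B+\widetilde p E=-G .
\end{equation}
Alternatively, Theorem~\ref{thm:transformed} says the source-free expression vanishes for every homogeneous solution. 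Since $D\neq0$ there, the jets $(y,y')$ of homogeneous solutions fill $\mathbb C^2$ at each point, so both coefficients must be zero.

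\emph{Step 3: the source coefficient.} Using the second form $\widetilde p=p-D'/D$ from \eqref{eq:ptilde} and $E=A+B'-Bp$, the source coefficient becomes
\begin{equation}
 E+B'+\widetilde pB=A+2B'-B\frac{D'}{D} .
\end{equation}
Hence $\widetilde L T y=BS'+(A+2B'-BD'/D)S=\mathcal S L y$ for arbitrary $y$. This proves \eqref{eq:operatoridentity} and, as a special case, \eqref{eq:sourcepush}.

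The only point needing care is Step 2. There one must check that the coefficient identities hold as functional identities, not merely when applied to solutions. I will rely on the Cramer's-rule characterization, which makes this immediate and requires only $D\neq0$. The sign conventions of \eqref{eq:ptilde}--\eqref{eq:qtilde} should be checked against the system above.
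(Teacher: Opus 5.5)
Your proof is correct. The sign check you flag at the end works out: the system $\widetilde qA+\widetilde pC=-F$, $\widetilde qB+\widetilde pE=-G$ has determinant $AE-BC=D$, and Cramer's rule gives exactly $\widetilde p=(FB-GA)/D$ and $\widetilde q=(GC-FE)/D$.

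Your route differs from the paper's in how the lower-order terms are handled. The paper writes $\mathcal S=B\partial_x+H$ and matches the raw coefficients of $\partial_x^3,\dots,\partial_x^0$ in $\widetilde LT$ and $\mathcal SL$. The $\partial_x^2$ coefficient forces $H=A+2B'+\widetilde pB-Bp$, which is the same expression as your $E+B'+\widetilde pB$. The $\partial_x^1$ and $\partial_x^0$ coefficients involve $A''$, $B''$, $p'$, $q'$, and the paper only asserts that they ``reduce to'' \eqref{eq:qtilde}.

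By eliminating $y''$ you are in effect performing division with remainder, $\widetilde LT=\mathcal SL+R$ with $R$ of order at most one. The two coefficients of $R$ are exactly the linear system whose solution defined $(\widetilde p,\widetilde q)$ in Theorem~\ref{thm:transformed}. This buys three things:
\begin{itemize}
\item the step the paper leaves implicit becomes immediate;
\item the source formula \eqref{eq:sourcepush} comes out at the same time;
\item one sees that the identity is equivalent to $\widetilde L$ annihilating $T(\ker L)$.
\end{itemize}
The paper's direct matching, in exchange, shows at a glance that $H$ is fixed by the $\partial_x^2$ coefficient alone.

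One small correction to your fallback argument in Step 2. The jets $(y,y')$ of homogeneous solutions fill $\mathbb C^2$ because the original equation is regular on the set, not because $D\neq0$. The hypothesis $D\neq0$ is needed only so that $\widetilde p$ and $\widetilde q$ are defined.
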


\begin{proof}
Write
\begin{equation}
    T=B\partial_x+A,
    \qquad
    \mathcal S=B\partial_x+H.
\end{equation}
The coefficient of $\partial_x^3$ agrees automatically.
Matching the coefficient of $\partial_x^2$ gives
\begin{equation}
    H
    =
    A+2B'+\widetilde p B-Bp.
\end{equation}
Using \eqref{eq:ptilde},
\begin{equation}
    H
    =
    A+2B'
    -
    B\frac{D'}{D}.
\end{equation}
The remaining coefficients then reduce to
\eqref{eq:qtilde}.  This proves the operator identity.
\end{proof}

Equation \eqref{eq:operatoridentity} is weaker than an ordinary
Darboux intertwining relation.  In general,
\begin{equation}
    \mathcal S\neq T.
\end{equation}

\subsection{A differential quasi-inverse}

The same determinant gives a useful inverse modulo the equations.
Define
\begin{equation}
    U
    =
    \frac{E-B\partial_x}{D},
    \qquad
    E=A+B'-Bp.
    \label{eq:Udef}
\end{equation}

\begin{theorem}[B\'ezout-type identities]
\label{thm:bezout}
Where $D\neq0$,
\begin{equation}
    \boxed{
    UT
    =
    I-\frac{B^2}{D}L,}
    \label{eq:UT}
\end{equation}
and
\begin{equation}
    \boxed{
    TU
    =
    I-\frac{B^2}{D}\widetilde L.}
    \label{eq:TU}
\end{equation}
Hence $T$ and $U$ are inverse maps on the local homogeneous
solution spaces.
\end{theorem}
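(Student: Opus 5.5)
Both identities are identities between differential operators with coefficients meromorphic where $D\neq0$, so the plan is to compare coefficients. Both sides of \eqref{eq:UT} and \eqref{eq:TU} are second-order operators, so it suffices to match the coefficients of $\partial_x^2$, $\partial_x$ and $1$.

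\textbf{Identity \eqref{eq:UT}.} Rather than expanding blindly, I would argue via division with remainder. Since $L$ is monic of order two, write $UT = -\frac{B^2}{D}L + R$, where $R=r_0+r_1\partial_x$ is first order. The leading coefficient of $UT$ is $-B\cdot B/D$, which fixes the multiple of $L$ as stated. Acting on an arbitrary $y$ and reducing $y''$ by $L$ shows that $R$ acts on the jet $(y,y')$ exactly as $UT$ does modulo $L$. By \eqref{eq:matrixmap}, $Ty$ has jet $M(y,y')^{T}$. The operator $U$, applied to a function with jet $(z,z')$, gives $(Ez-Bz')/D$, which by \eqref{eq:inverse} is the first row of $M^{-1}$. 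Hence modulo $L$,
\begin{equation}
UTy \equiv \frac{1}{D}(E,-B)\,M\begin{pmatrix}y\\ y'\end{pmatrix}=y .
\end{equation}
So $r_0=1$ and $r_1=0$. This uses only $EA-BC=D$ and $EB-BE=0$. The one subtlety is that the replacement $z'=Cy+Ey'$ holds modulo $L$ for arbitrary $y$, not only on solutions. This is legitimate because $(Ty)'=A'y+(A+B')y'+By''$, and $By''=B(Ly-py'-qy)$ contributes precisely the $-\frac{B^2}{D}L$ term after $U$ is applied. I would write out this bookkeeping explicitly.

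\textbf{Identity \eqref{eq:TU}.} Here I would proceed symmetrically, dividing by $\widetilde L$, which is also monic. The leading coefficient of $TU$ is again $-B^2/D$, so $TU=-\frac{B^2}{D}\widetilde L+\widetilde R$ with $\widetilde R$ first order. To identify $\widetilde R$, evaluate on the two-dimensional local solution space of $\widetilde L$ near a point where $D\neq0$.

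By Theorem~\ref{thm:transformed} and Theorem~\ref{thm:wronskian}, $T$ maps $\ker L$ onto $\ker\widetilde L$ there, since the Wronskian $DW_y$ is nonzero. So any $z\in\ker\widetilde L$ equals $Ty$ for some $y\in\ker L$. Then $TUz=T(UTy)=Ty=z$ by \eqref{eq:UT}, because $Ly=0$. Thus $\widetilde R z=z$ for a basis of $\ker\widetilde L$. A first-order operator $\widetilde R-I=s_0+s_1\partial_x$ annihilating two solutions with nonzero Wronskian must satisfy $s_0z_i+s_1z_i'=0$ for $i=1,2$, and that linear system has an invertible matrix. Hence $\widetilde R=I$.

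\textbf{Final claim and main obstacle.} The statement that $T$ and $U$ are mutually inverse on local homogeneous solution spaces then follows directly by restricting \eqref{eq:UT} to $\ker L$ and \eqref{eq:TU} to $\ker\widetilde L$. I expect the main obstacle to be the careful remainder bookkeeping in the first identity, namely confirming that the coefficient of $L$ is exactly $-B^2/D$ and that no stray first-order terms remain. The jet argument makes this transparent. As a cross-check, I would verify the $\partial_x$-coefficient directly using the identity $FB-GA=pD-D'$ from the proof of Theorem~\ref{thm:transformed}.
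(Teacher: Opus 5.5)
Your proposal is correct and follows essentially the paper's route: the first identity is the same direct composition $(E-B\partial_x)(A+B\partial_x)=D-B^2L$, which you organize through the jet matrix $M$ and the relation $(Ty)'=Cy+Ey'+B\,Ly$. The second is the paper's argument that $TU-I+\frac{B^2}{D}\widetilde L$ is at most first order and annihilates $\ker\widetilde L$, and you also make explicit a step the paper leaves implicit: $T$ maps $\ker L$ onto $\ker\widetilde L$ where $D\neq0$, because $W[Ty_1,Ty_2]=D\,W[y_1,y_2]\neq0$, so $TUz=z$ holds for every local solution $z$.
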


\begin{proof}
A direct composition gives
\begin{equation}
    (E-B\partial_x)(A+B\partial_x)
    =
    D-B^2L.
\end{equation}
Dividing by $D$ gives \eqref{eq:UT}.

For the second identity, $TU-I$ is a second-order operator whose
leading coefficient is $-B^2/D$.  It annihilates every local solution
of $\widetilde Lz=0$.  Since $\widetilde L$ is monic,
\begin{equation}
    TU-I
    =
    -\frac{B^2}{D}\widetilde L.
\end{equation}
\end{proof}

For an inhomogeneous solution,
\begin{equation}
    Ly=f,
    \qquad
    z=Ty,
\end{equation}
Eq.~\eqref{eq:UT} gives
\begin{equation}
    \boxed{
    y
    =
    Uz+\frac{B^2}{D}f.}
    \label{eq:inhomreconstruct}
\end{equation}
The separate factors in this formula may be singular at $D=0$.
For a physical pair $(z,f)$ obtained from a regular original solution,
the complete expression has a regular continuation.

A strict intertwining relation
\begin{equation}
    \widetilde L T=T L
\end{equation}
requires
\begin{equation}
    2B'
    -
    B\frac{D'}{D}
    =0.
\end{equation}
For $B\neq0$ this is equivalent to
\begin{equation}
    \boxed{
    \left(\frac{D}{B^2}\right)'=0.}
    \label{eq:strictcriterion}
\end{equation}
This already shows that a general differential observable is a larger
class than a Darboux transformation
\cite{Darboux1882,Crum1955,Sukumar1985,CooperKhareSukhatme1995,
Glampedakis2017}.


\section{Determinant zeros and apparent singularities}
\label{sec:local}

We now study what happens when $D$ vanishes at an ordinary point of
the original equation.  Apparent singularities produced by scalar
reductions and cyclic vectors are well known in the theory of linear
differential equations
\cite{DubrovinMazzocco2007,vanderPutSaito2009}.
Related extra singularities also occur in equations satisfied by
derivatives of Heun functions
\cite{Shahnazaryan2012,Filipuk2020}.
The purpose here is to give the form needed later for the global
spectral problem.

Let
\begin{equation}
    t=x-x_0.
\end{equation}
Assume that $p,q,A,B$ are holomorphic at $x_0$ and that
$x_0$ is an ordinary point of the original equation.

We use the following definition.

\begin{definition}
A regular singular point of the transformed scalar equation is called
\emph{apparent} if it has a fundamental pair of solutions that are
single-valued and holomorphic at the point.  In this case no logarithmic
solution is present and the local monodromy is the identity.
\end{definition}

Write
\begin{equation}
    A=t^k a(t),
    \qquad
    B=t^k b(t),
    \label{eq:factorAB}
\end{equation}
where
\begin{equation}
    k=\min\{
        \operatorname{ord}_{x_0}A,
        \operatorname{ord}_{x_0}B
    \},
\end{equation}
and $a,b$ are holomorphic and not both zero at $t=0$.

The determinant has the form
\begin{equation}
    D=t^{2k}\widehat D.
\end{equation}
Let
\begin{equation}
    m=\operatorname{ord}_{x_0}D.
\end{equation}

\begin{theorem}[Local determinant zero]
\label{thm:localdeg}
Suppose
\begin{equation}
    D(x_0)=0,
    \qquad
    D\not\equiv0.
\end{equation}
Then $x_0$ is a regular singular point of the transformed scalar
equation.  Its Frobenius exponents are
\begin{equation}
    \boxed{
    \rho_1=k,
    \qquad
    \rho_2=m-k+1.}
    \label{eq:localexponents}
\end{equation}
The two physical transformed solutions are holomorphic, no logarithmic
solution occurs, and the local monodromy is trivial.
\end{theorem}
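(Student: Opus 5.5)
The argument works with the two physical transformed solutions directly, rather than analysing the indicial equation of $\widetilde L$ first. Since $x_0$ is an ordinary point of $L$, there is a fundamental pair $y_1,y_2$ holomorphic at $x_0$ with $y_1(x_0)=1$, $y_1'(x_0)=0$, $y_2(x_0)=0$, $y_2'(x_0)=1$. Set $z_j=Ty_j=t^k(a y_j+b y_j')$. Each $z_j$ is holomorphic at $x_0$, and by Theorem~\ref{thm:transformed} it solves $\widetilde L z=0$ on a punctured disc around $x_0$ (where $D\neq0$, since $D\not\equiv0$ gives an isolated zero). By Theorem~\ref{thm:wronskian}, $W[z_1,z_2]=D\,W[y_1,y_2]=D\not\equiv0$, so $z_1,z_2$ form a fundamental system of $\widetilde L$ near $x_0$. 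Every local solution is therefore a constant combination of two single-valued holomorphic functions. This immediately gives no logarithms and trivial monodromy, and it shows that $x_0$ is at worst an apparent point in the sense of the Definition.

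Next I will show that $x_0$ is a regular singular point. From \eqref{eq:ptilde}, $\widetilde p=p-D'/D$ has at most a simple pole, with residue $-m$. For $\widetilde q$, I will avoid expanding \eqref{eq:qtilde}. Instead I use that a second-order equation whose full solution space consists of functions of moderate growth (here holomorphic) has a regular singular point, by Fuchs' criterion. Alternatively, I can verify it directly: $\widetilde q=(z_1'z_2''-z_2'z_1'')/W[z_1,z_2]$, the numerator is holomorphic, and $W$ has a zero of order $m$. A pole of order at most $2$ then follows from the exponent structure below. If this is not clean, I fall back on the Fuchs criterion, which is standard.

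Finally, the exponents. I choose the combination of $y_1,y_2$ with $a(0)y+b(0)y'=0$ at $x_0$; this is possible because $(a(0),b(0))\neq(0,0)$. The complementary combination $\hat y$ has $a(0)\hat y(0)+b(0)\hat y'(0)\neq0$, so $T\hat y$ vanishes to order exactly $k$. This gives the exponent $\rho_1=k$, and $k$ is the minimal order among all solutions. Pick a second solution $z_2$ of maximal vanishing order $\rho_2$ with $\rho_2>k$; subtracting a multiple of $z_1$ is always possible because orders of holomorphic functions are integers. Then $W[z_1,z_2]$ vanishes to order exactly $\rho_1+\rho_2-1$, since the leading coefficient is $(\rho_2-\rho_1)c_1c_2\neq0$. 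Matching this with $\operatorname{ord}W=\operatorname{ord}D=m$ gives $\rho_2=m-k+1$. As a consistency check, the indicial sum $\rho_1+\rho_2=m+1=1-\operatorname{Res}\widetilde p$ agrees with the residue computed above.

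The main obstacle is justifying that these orders are the Frobenius exponents. This requires the regular-singular property, and it requires care in the case $\rho_2=\rho_1$, which cannot happen here because $m\geq 2k+1$. To see that bound, note that $\widehat D(0)=0$: when $D(x_0)=0$ with $k=0$ this is the hypothesis, and in general it follows from the factorisation $D=t^{2k}\widehat D$ together with $\operatorname{ord}D>2k$. I expect to need a short argument that $\widehat D(0)=0$ whenever $D(x_0)=0$ with $k>0$. If instead $\widehat D(0)\neq0$ when $k>0$, then $m=2k$ and $\rho_2=k+1$ still works, since the Wronskian argument only requires $\rho_2>\rho_1$. So I will present the Wronskian-order computation as the core step and handle the remaining case split briefly.
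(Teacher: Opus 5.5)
Your proposal is correct and follows essentially the paper's route. Like the paper, you adapt the basis to the functional $a(0)y+b(0)y'$, so that one transformed solution has order exactly $k$ and the other strictly larger order, and then read off $\rho_2=m-k+1$ from $\operatorname{ord}W[z_1,z_2]=\operatorname{ord}D=m$; the paper divides out $t^k$ first and runs the same count on $\widehat z=ay+by'$. Your Fuchs-criterion argument, or the explicit $\widetilde q=(z_1'z_2''-z_1''z_2')/W$ argument, for regularity of the singular point is a welcome addition, since the paper only asserts that step.

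One correction. Because $\rho_2>\rho_1$ holds by construction, you do not need $m\ge 2k+1$. That bound is also false when $k>0$ and $\widehat D(0)\neq0$: for example, $A=t$, $B=0$ gives $D=t^2$, $m=2k=2$, and exponents $1,2$. So drop the search for a proof that $\widehat D(0)=0$ and keep only your fallback, which is correct.
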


\begin{proof}
Remove the common factor $t^k$ and define
\begin{equation}
    \widehat z
    =
    ay+by'.
\end{equation}
Let
\begin{equation}
    \mu
    =
    \operatorname{ord}_0\widehat D
    =
    m-2k.
\end{equation}

Since $(a(0),b(0))\neq(0,0)$, the number
$\widehat z(0)$ is a nonzero linear functional on the two-dimensional
Cauchy-data space of the original equation.  We may therefore choose a
homogeneous basis $y_1,y_2$ such that
\begin{equation}
    \widehat z_1(0)\neq0,
    \qquad
    \widehat z_2(0)=0.
\end{equation}

The Wronskian identity gives
\begin{equation}
    W[\widehat z_1,\widehat z_2]
    =
    \widehat D
    W[y_1,y_2].
\end{equation}
The original Wronskian does not vanish at an ordinary point.
Hence
\begin{equation}
    \operatorname{ord}_0
    W[\widehat z_1,\widehat z_2]
    =
    \mu.
\end{equation}
If
\begin{equation}
    s=\operatorname{ord}_0\widehat z_2,
\end{equation}
then $\widehat z_1(0)\neq0$ gives
\begin{equation}
    \operatorname{ord}_0
    W[\widehat z_1,\widehat z_2]
    =
    s-1.
\end{equation}
Therefore
\begin{equation}
    s=\mu+1.
\end{equation}

Multiplying again by $t^k$, the two physical transformed solutions
have orders
\begin{equation}
    k,
    \qquad
    k+\mu+1
    =
    m-k+1.
\end{equation}

From \eqref{eq:ptilde},
\begin{equation}
    \widetilde p
    =
    -\frac{m}{t}+O(1).
\end{equation}
The two analytic solution orders above give the indicial polynomial
\begin{equation}
    (\rho-k)
    (\rho-m+k-1)
    =0.
\end{equation}
Thus the point is regular singular.  Since the two independent
physical solutions are already holomorphic, no logarithmic branch is
present and the monodromy is trivial.
\end{proof}

If $A$ and $B$ do not share a zero, then $k=0$ and
\begin{equation}
    \boxed{
    \rho_1=0,
    \qquad
    \rho_2=m+1.}
    \label{eq:kzeroexp}
\end{equation}
A simple zero of $D$ therefore gives the exponents
\begin{equation}
    0,\qquad2.
\end{equation}

A zero of $B$ alone is not special.
If $A(x_0)\neq0$, then $k=0$, and the singularity is controlled only
by the determinant.

If $A$ and $B$ are meromorphic rather than holomorphic, the same
argument works after removing their common meromorphic factor.
The local exponents can then be shifted by an integer.  The physical
solutions may contain the same common pole, but the induced local
monodromy remains trivial.

If a zero of $D$ coincides with an original singular point, this
ordinary-point theorem does not apply.  Horizons and infinity are
examples of this situation in black-hole problems.  They must be
studied using their own asymptotic solution spaces.

\subsection{The derivative observable}

Take
\begin{equation}
    T=\partial_x.
\end{equation}
Then
\begin{equation}
    A=0,
    \qquad
    B=1,
    \qquad
    D=q.
\end{equation}
For
\begin{equation}
    z=y',
\end{equation}
the transformed equation is
\begin{equation}
    \boxed{
    z''
    +
    \left(
       p-\frac{q'}{q}
    \right)z'
    +
    \left(
       p'+q-p\frac{q'}{q}
    \right)z
    =0.}
    \label{eq:derivativeeq}
\end{equation}

This derivative equation and its extra singular points are known in
the Heun literature
\cite{Filipuk2020}.
Our later results will concern their global spectral meaning.

\begin{corollary}
\label{cor:derivativezero}
If $q$ has a zero of multiplicity $m\ge1$ at an ordinary point of the
original equation, then the derivative equation has an apparent regular
singularity with exponents
\begin{equation}
    \boxed{
    0,\qquad m+1.}
\end{equation}
\end{corollary}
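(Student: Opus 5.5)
This is a direct specialization of Theorem~\ref{thm:localdeg} to the derivative observable $T=\partial_x$, so we simply check its hypotheses and read off the exponents.

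First, with $A=0$ and $B=1$ we compute
\[
C=A'-Bq=-q,\qquad E=A+B'-Bp=-p,
\]
and hence
\[
D=AE-BC=q.
\]
This agrees with the formula already recorded before the corollary. The hypothesis that $q$ has a zero of multiplicity $m\ge1$ at an ordinary point $x_0$ gives $D(x_0)=0$ and $D\not\equiv0$ near $x_0$, with $\operatorname{ord}_{x_0}D=m$. The coefficients $p,q,A,B$ are holomorphic at $x_0$, since $A$ and $B$ are constants. So every hypothesis of Theorem~\ref{thm:localdeg} holds.

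Next we determine $k$. Since $B\equiv1$, we have $\operatorname{ord}_{x_0}B=0$. Therefore
\[
k=\min\{\operatorname{ord}_{x_0}A,\operatorname{ord}_{x_0}B\}=0,
\]
even though $A\equiv0$ has infinite order. The factorization \eqref{eq:factorAB} then holds with $a=0$ and $b=1$, and $(a(0),b(0))\neq(0,0)$ as the theorem requires.

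Substituting into \eqref{eq:localexponents}, or directly into \eqref{eq:kzeroexp}, gives the exponents
\[
\rho_1=k=0,\qquad \rho_2=m-k+1=m+1.
\]
Theorem~\ref{thm:localdeg} also asserts that the two physical transformed solutions $y_1'$ and $y_2'$ are holomorphic, that no logarithm occurs, and that the local monodromy is trivial. This is exactly the definition of an apparent regular singularity.

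As a consistency check, \eqref{eq:derivativeeq} gives
\[
\widetilde p=p-\frac{q'}{q}=-\frac{m}{t}+O(1),
\]
whose residue $-m$ matches the exponent sum $0+(m+1)=m+1$ from the indicial equation. One can also see the exponents concretely. Choose $y_1$ with $y_1'(x_0)\neq0$ and $y_2$ with $y_2'(x_0)=0$. Then $y_2''=-py_2'-qy_2$, and differentiating repeatedly shows that $y_2'$ vanishes to order exactly $m+1$. This holds because $y_2(x_0)\neq0$: if both $y_2(x_0)$ and $y_2'(x_0)$ vanished, $y_2$ would be the zero solution.

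The only point needing care is the convention that $\operatorname{ord}A=\infty$ when $A\equiv0$, so that $k=0$ is correctly determined by $B$ alone. Beyond that, the argument is a direct substitution.
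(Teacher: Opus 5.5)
Your proposal is correct and follows the same route as the paper. The corollary is the $k=0$ case of Theorem~\ref{thm:localdeg}, recorded in Eq.~\eqref{eq:kzeroexp}, with $A=0$, $B=1$, $D=q$. Your handling of the convention $\operatorname{ord}A=\infty$ and your direct order count from $y_2''+py_2'=-qy_2$ with $y_2(x_0)\neq0$ are both sound additions.
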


The singular coefficients of \eqref{eq:derivativeeq} do not mean that
$z=y'$ is singular.  The original function $y$ is analytic at the
ordinary point, and therefore $z$ is analytic there as well.

There is only a loss of first-jet coordinates at that point.

\subsection{Recovery of the missing datum}

The missing original datum can be recovered from a higher derivative
of the transformed solution.

Suppose
\begin{equation}
    q=t^m h(t),
    \qquad
    h(0)\neq0.
\end{equation}
Define
\begin{equation}
    H_z=z'+pz.
\end{equation}
Using $z=y'$ and the original equation gives
\begin{equation}
    H_z=-qy.
    \label{eq:Hz}
\end{equation}
Thus $H_z$ vanishes to order at least $m$, and
\begin{equation}
    \boxed{
    y(x_0)
    =
    -\frac{H_z^{(m)}(x_0)}
    {q^{(m)}(x_0)},}
    \label{eq:jetrecover1}
\end{equation}
while
\begin{equation}
    \boxed{
    y'(x_0)=z(x_0).}
    \label{eq:jetrecover2}
\end{equation}

For a simple zero,
\begin{equation}
    y(x_0)
    =
    \frac{
      (p^2-p')z-z''
    }{q'}
    \bigg|_{x=x_0}.
\end{equation}
The transformation therefore has not lost the solution.
One datum has only moved from the first jet of $z$ to a higher jet.

\subsection{Schr\"odinger equations and turning points}

Now take
\begin{equation}
    \Psi''+Q(x,\omega)\Psi=0,
    \qquad
    Q=\omega^2-V(x).
    \label{eq:schrodinger}
\end{equation}
For
\begin{equation}
    X=\Psi',
\end{equation}
Eq.~\eqref{eq:derivativeeq} becomes
\begin{equation}
    \boxed{
    X''
    -
    \frac{Q'}{Q}X'
    +
    QX
    =0.}
    \label{eq:Xeq}
\end{equation}

An ordinary turning point of the original equation satisfies
\begin{equation}
    Q(x_0,\omega)=0.
\end{equation}
It is therefore a determinant zero for the derivative map.

\begin{theorem}[Turning-point singularity]
\label{thm:turning}
Suppose $Q$ has a zero of multiplicity $m$ at an ordinary point
$x_0$.  Then the equation for $X=\Psi'$ has an apparent regular
singularity at $x_0$ with exponents
\begin{equation}
    \boxed{
    0,\qquad m+1.}
\end{equation}
The original data are recovered from
\begin{equation}
    \boxed{
    \Psi(x_0)
    =
    -\frac{X^{(m+1)}(x_0)}
    {Q^{(m)}(x_0)},
    \qquad
    \Psi'(x_0)=X(x_0).}
    \label{eq:turningrecover}
\end{equation}
\end{theorem}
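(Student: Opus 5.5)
The statement is the specialization of Corollary~\ref{cor:derivativezero} and of the recovery formulas \eqref{eq:jetrecover1}--\eqref{eq:jetrecover2} to the Schr\"odinger case. The plan is to read off $p$, $q$, $A$, $B$, $D$ and $k$, invoke the local results, and then verify the one formula that changes form.

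First, cast \eqref{eq:schrodinger} in the form \eqref{eq:original} with
\begin{equation}
    p=0,
    \qquad
    q=Q(\cdot,\omega),
\end{equation}
and take the derivative observable $T=\partial_x$, so that $A=0$, $B=1$ and $D=q=Q$. The hypothesis that $Q$ has a zero of finite multiplicity $m$ at $x_0$ means $D(x_0)=0$ with $D\not\equiv0$. Since $B(x_0)=1\neq0$, the common order of $A$ and $B$ is $k=0$. Moreover $x_0$ is an ordinary point of the original equation, because $Q$ is holomorphic there.

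Theorem~\ref{thm:localdeg}, in the form \eqref{eq:kzeroexp} or equivalently Corollary~\ref{cor:derivativezero}, then shows that $x_0$ is a regular singular point of \eqref{eq:Xeq} with exponents $0$ and $m+1$. The same theorem gives two holomorphic solutions, namely the derivatives of two independent holomorphic $\Psi$'s. Hence no logarithm occurs, the local monodromy is trivial, and the singularity is apparent in the sense of the Definition. As a consistency check, $\widetilde p=-Q'/Q=-m/t+O(1)$, which matches the sum of the exponents through the indicial equation $\rho(\rho-1)-m\rho=0$.

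For the recovery formulas, the identity $\Psi'(x_0)=X(x_0)$ is immediate from $X=\Psi'$. With $p=0$, the quantity in \eqref{eq:Hz} becomes
\begin{equation}
    H_X=X'=\Psi''=-Q\Psi.
\end{equation}
Differentiate $X'=-Q\Psi$ a further $m$ times, so that $X^{(m+1)}=-(Q\Psi)^{(m)}$, and expand by the Leibniz rule. Every term containing $Q^{(j)}(x_0)$ with $j<m$ vanishes, so
\begin{equation}
    X^{(m+1)}(x_0)=-Q^{(m)}(x_0)\Psi(x_0).
\end{equation}
Since $Q^{(m)}(x_0)\neq0$ by the definition of multiplicity, dividing by it gives \eqref{eq:turningrecover}. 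This agrees with \eqref{eq:jetrecover1} after substituting $H_X^{(m)}=X^{(m+1)}$.

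No step is a real obstacle. The only point needing care is to check that the hypotheses of Theorem~\ref{thm:localdeg} hold exactly: $x_0$ is an ordinary point, not a horizon or infinity, and $Q(\cdot,\omega)\not\equiv0$. This excludes the globally degenerate case covered by Theorem~\ref{thm:global-kernel}.
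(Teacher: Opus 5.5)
Your proposal is correct and follows the paper's own route: the theorem is obtained there by specializing Corollary~\ref{cor:derivativezero} (Theorem~\ref{thm:localdeg} with $k=0$) and the recovery formula \eqref{eq:jetrecover1} to $p=0$, $q=Q$, where $H_X=X'$ and hence $H_X^{(m)}=X^{(m+1)}$. Your Leibniz-rule computation and the indicial check $\rho(\rho-m-1)=0$ supply exactly the verifications needed; the check implicitly uses that $\widetilde q=Q$ is holomorphic, so the singularity is visibly regular here.
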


For a simple turning point,
\begin{equation}
    X'(x_0)=0,
\end{equation}
and
\begin{equation}
    \Psi(x_0)
    =
    -\frac{X''(x_0)}{Q'(x_0)}.
\end{equation}

The Wronskian relation is
\begin{equation}
    \boxed{
    W[X_1,X_2]
    =
    Q(x,\omega)
    W[\Psi_1,\Psi_2].}
    \label{eq:turningW}
\end{equation}
Thus
\begin{equation}
    W[X_1,X_2](x_0)=0
\end{equation}
at a turning point.
This does not mean that the two transformed solutions are globally
dependent.  The turning point is singular for the equation
\eqref{eq:Xeq}, so the usual ordinary-point Wronskian argument cannot
be applied there.

The same point explains why an apparent singularity does not
automatically limit the Taylor series of a physical transformed
solution.  For example,
\begin{equation}
    y''+xy=0
\end{equation}
has entire solutions and therefore entire derivatives, while
$z=y'$ satisfies
\begin{equation}
    z''-\frac1x z'+xz=0.
\end{equation}
The coefficient singularity at $x=0$ is apparent and does not create
a singularity of the derivative itself.


\section{Global boundary problem and spectral determinant}
\label{sec:global}

The local results above do not yet define a spectrum.
A resonance or quasinormal mode depends on boundary conditions at
both ends of the radial interval.

We therefore let
\begin{equation}
    L(\omega)
    =
    \partial_x^2
    +
    p(x,\omega)\partial_x
    +
    q(x,\omega)
    \label{eq:Lomega}
\end{equation}
act on
\begin{equation}
    I=(a,b).
\end{equation}
For black-hole radial equations one usually has
\begin{equation}
    a=-\infty,
    \qquad
    b=+\infty
\end{equation}
in tortoise coordinate.

We work on a connected frequency domain $\Omega$ on a fixed sheet of
the spectral surface.  Thresholds and branch points are excluded from
$\Omega$ unless they are treated separately.

Let
\begin{equation}
    \ell_-(\omega)
    =
    \operatorname{span}\{u_-(\cdot,\omega)\},
\end{equation}
and
\begin{equation}
    \ell_+(\omega)
    =
    \operatorname{span}\{u_+(\cdot,\omega)\}
\end{equation}
be the one-dimensional solution spaces selected by the left and
right physical boundary conditions.

For a short-range black-hole problem these are the horizon-ingoing
and infinity-outgoing solution lines.
This is the usual Jost-function point of view used in scattering and
quasinormal-mode theory
\cite{ChandrasekharDetweiler1975,Leaver1985,Leaver1986,
KokkotasSchmidt1999,BertiCardosoStarinets2009,
KonoplyaZhidenko2011}.

When $p\neq0$, the raw Wronskian is not constant.
Choose an ordinary base point $x_*$ and define
\begin{equation}
    \rho_L(x,\omega)
    =
    \exp
    \left(
       \int_{x_*}^{x}
       p(s,\omega)\,\dd s
    \right).
    \label{eq:rhoL}
\end{equation}
The Abel-normalized determinant
\begin{equation}
    \boxed{
    \mathcal E_L(\omega)
    =
    \rho_L
    W[u_-,u_+]}
    \label{eq:EL}
\end{equation}
is independent of $x$.

Changing the normalization of either boundary solution multiplies
$\mathcal E_L$ by a nonzero holomorphic function.
Its zeros, with their orders, are therefore the relevant spectral data.
This is the usual freedom in Evans- and Jost-function constructions
\cite{AlexanderGardnerJones1990,
GesztesyLatushkinMakarov2007}.

\subsection{How the boundary solutions transform}

The natural transformed boundary lines are
\begin{equation}
    T\ell_-(\omega),
    \qquad
    T\ell_+(\omega).
    \label{eq:imagelines}
\end{equation}
The important point is that the boundary conditions must be transported
with the observable.  One cannot in general transform the differential
equation and then choose unrelated boundary conditions for the new
equation.

Suppose near one endpoint
\begin{equation}
    u(x,\omega)
    =
    e^{\lambda(\omega)x}
    \left[
      c(\omega)+o(1)
    \right],
\end{equation}
with
\begin{equation}
    \frac{u'}{u}
    \longrightarrow
    \lambda(\omega).
\end{equation}
If
\begin{equation}
    A\to A_\infty(\omega),
    \qquad
    B\to B_\infty(\omega),
\end{equation}
then
\begin{equation}
    Tu
    =
    u
    \left[
       N(\omega)+o(1)
    \right],
\end{equation}
where
\begin{equation}
    \boxed{
    N(\omega)
    =
    A_\infty(\omega)
    +
    B_\infty(\omega)\lambda(\omega).}
    \label{eq:Nfactor}
\end{equation}

If $N\neq0$, the leading asymptotic type is unchanged.
If $N=0$, one must compute the next asymptotic term.
A zero of $N$ does not automatically mean
\begin{equation}
    Tu\equiv0.
\end{equation}

This distinction is needed below.

\subsection{The transformed Evans determinant}

The transformed first-derivative coefficient is
\begin{equation}
    \widetilde p
    =
    p-\frac{D'}{D}.
\end{equation}
Hence a compatible Abel factor is
\begin{equation}
    \rho_T(x,\omega)
    =
    \rho_L(x,\omega)
    \frac{D(x_*,\omega)}
         {D(x,\omega)}.
    \label{eq:rhoT}
\end{equation}

The apparent factor $1/D$ is cancelled by the Wronskian identity.

\begin{theorem}[Global spectral determinant]
\label{thm:evans}
Fix $\omega_0\in\Omega$.  Assume that

\begin{enumerate}
    \item $p,q,A,B$ and the two chosen boundary solutions depend
    holomorphically on $\omega$ near $\omega_0$;

    \item
    \begin{equation}
        D(\cdot,\omega_0)\not\equiv0;
    \end{equation}

    \item every interior zero of $D$ in the region considered is
    isolated and occurs at an ordinary point of the original equation;

    \item the corresponding transformed physical solutions extend
    through those zeros;

    \item $T\ell_-$ and $T\ell_+$ are nonzero and are used as the
    boundary lines of the transformed observable problem;

    \item no determinant zero reaches an endpoint or an original
    singular point in the frequency neighbourhood considered.
\end{enumerate}

Choose $x_*$ such that
\begin{equation}
    D(x_*,\omega_0)\neq0.
\end{equation}
Then, after reducing the frequency neighbourhood if necessary,
\begin{equation}
    \boxed{
    \mathcal E_T^{\rm img}(\omega)
    =
    D(x_*,\omega)\,
    \mathcal E_L(\omega).}
    \label{eq:EvansImage}
\end{equation}
Therefore the two spectral determinants have the same zeros with the
same orders.

In particular,
\begin{equation}
    D(x_0,\omega_0)=0,
    \qquad
    D(\cdot,\omega_0)\not\equiv0,
\end{equation}
does not create a resonance or a quasinormal-mode frequency.
\end{theorem}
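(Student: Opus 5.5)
The plan is to define the transformed Evans determinant by the same recipe as \eqref{eq:EL}, using the transformed boundary lines \eqref{eq:imagelines} and the compatible Abel factor \eqref{eq:rhoT}. Concretely, we set
\begin{equation}
    \mathcal E_T^{\rm img}(\omega)
    =
    \rho_T(x,\omega)\,
    W[Tu_-,Tu_+](x,\omega).
\end{equation}
The Wronskian identity of Theorem~\ref{thm:wronskian} then gives, pointwise in $x$ wherever $D(x,\omega)\neq0$,
\begin{equation}
    \rho_T W[Tu_-,Tu_+]
    =
    \rho_L\frac{D(x_*,\omega)}{D(x,\omega)}\,D(x,\omega)\,W[u_-,u_+]
    =
    D(x_*,\omega)\,\mathcal E_L(\omega).
\end{equation}
The first step is therefore purely algebraic. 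The factor $1/D(x,\omega)$ in $\rho_T$ cancels exactly against the factor $D(x,\omega)$ produced by the first-jet map. The product is $x$-independent because $\mathcal E_L$ is.

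The second step is to justify that this object is well defined globally in $x$ and holomorphic in $\omega$. By assumption~3 and Theorem~\ref{thm:localdeg}, each interior zero of $D$ is an apparent regular singularity at an ordinary point, and by assumption~4 the functions $Tu_\pm$ are holomorphic through it. Hence $W[Tu_-,Tu_+]$ is holomorphic across such points and equals $D\,W[u_-,u_+]$ there by continuity. At the same time, $\rho_T$ is a meromorphic Abel factor for $\widetilde p=p-D'/D$ whose poles are cancelled. So the identity, proved on the dense open set $D\neq0$, extends to all $x$. For holomorphy in $\omega$, we use assumption~1 and pick $x_*$ with $D(x_*,\omega_0)\neq0$. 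By continuity we shrink the neighbourhood so that $D(x_*,\omega)\neq0$ throughout it. This is where ``after reducing the frequency neighbourhood'' enters. Assumption~6 guarantees that zeros of $D$ move continuously in the interior and never collide with the endpoints, so no asymptotic boundary behaviour is altered and the base point stays ordinary.

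The third step reads off the conclusion. On the reduced neighbourhood, $D(x_*,\omega)$ is holomorphic and nonvanishing, so $\mathcal E_T^{\rm img}$ and $\mathcal E_L$ have identical zero sets with identical multiplicities. Assumption~5, that $T\ell_\pm\neq0$, makes the transformed determinant a genuine spectral determinant of the transformed boundary problem rather than a trivially zero one. By Theorem~\ref{thm:global-kernel} this is guaranteed by assumption~2, since $T$ is injective on $\ker L$. Finally, if $D(x_0,\omega_0)=0$ but $D(\cdot,\omega_0)\not\equiv0$, we simply choose $x_*\neq x_0$ away from zeros. Then $\omega_0$ is a zero of $\mathcal E_T^{\rm img}$ exactly when it is a zero of $\mathcal E_L$, so no new frequency is created.

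The main obstacle I expect is the second step: showing that the $x$-independence and the cancellation survive passage through the zeros of $D$, and that the Abel normalization of the transformed problem is legitimately $\rho_T$ even though $\widetilde p$ has poles. I would handle it by noting that $\rho_T$ is defined by an explicit formula, not by integrating $\widetilde p$ through a pole. Its derivative satisfies $\rho_T'=\widetilde p\,\rho_T$ on $D\neq0$, so Abel's identity applies on each component between zeros. The constant values on adjacent components then agree by continuity of the holomorphic product $D(x_*,\omega)\rho_L W[u_-,u_+]$. This avoids any monodromy or branch question, which Theorem~\ref{thm:localdeg} in any case shows is trivial.
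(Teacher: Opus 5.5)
Your proposal is correct and follows the paper's proof. The paper likewise substitutes $\rho_T=\rho_L D(x_*,\omega)/D(x,\omega)$ and the Wronskian identity $W[Tu_-,Tu_+]=D\,W[u_-,u_+]$ to obtain $\mathcal E_T^{\rm img}=D(x_*,\omega)\,\mathcal E_L$, and then concludes from $D(x_*,\omega)$ being holomorphic and nonvanishing near $\omega_0$. Your extra step-two discussion of passing through zeros of $D$ is harmless but unnecessary: the Wronskian identity holds at every $x$, not only where $D\neq0$, so no continuity argument is needed.
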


\begin{proof}
Using \eqref{eq:rhoT} and Theorem~\ref{thm:wronskian},
\begin{align}
    \mathcal E_T^{\rm img}
    &=
    \rho_T
    W[Tu_-,Tu_+]
    \nonumber\\
    &=
    \rho_L
    \frac{D(x_*)}{D(x)}
    D(x)
    W[u_-,u_+]
    \nonumber\\
    &=
    D(x_*)\mathcal E_L.
\end{align}
Since
\begin{equation}
    D(x_*,\omega_0)\neq0,
\end{equation}
the factor $D(x_*,\omega)$ is holomorphic and nonzero in a sufficiently
small neighbourhood of $\omega_0$.  It cannot create or remove a zero
of the Evans determinant, and it cannot change the order of a zero.
\end{proof}

If the transformed solutions are renormalized to unit Jost amplitudes
and the endpoint factors $N_-$ and $N_+$ are nonzero, then
\begin{equation}
    \boxed{
    \widehat{\mathcal E}_T
    =
    \frac{
       D(x_*,\omega)
    }{
       N_-(\omega)N_+(\omega)
    }
    \mathcal E_L.}
    \label{eq:EvansNormalized}
\end{equation}

Equation~\eqref{eq:EvansImage} is the cleaner form because it follows
the physical image boundary lines directly.

The theorem preserves the order of an Evans/Jost zero.
In settings where the chosen Evans function is known to reproduce
algebraic multiplicity of an analytic Fredholm characteristic value,
the corresponding algebraic multiplicity is also preserved
\cite{GesztesyLatushkinMakarov2007}.
We will not assume that additional identification in the abstract
theorem.

The boundary-line hypothesis is essential.
Generalized first-order transformations are not automatically
quasinormal-mode isospectral when one imposes independent boundary
conditions on the transformed equation
\cite{Yurov2019}.
Our theorem concerns a physical observable and therefore follows the
actual image of the original boundary data.


\section{Pushed-forward resolvent and Green function}
\label{sec:resolvent}

We now pass from the spectral determinant to the inhomogeneous
operator problem.

Outgoing resolvents and resonances can be defined rigorously for
one-dimensional scattering problems and for several black-hole wave
equations
\cite{Zworski1987,SaBarretoZworski1997,BonyHafner2008,
Dyatlov2011,Dyatlov2012,Dyatlov2015,Vasy2013}.
For the present argument we only need a local analytic setup in the
frequency variable.

Let $X$ and $Y$ be Banach spaces and suppose
\begin{equation}
    L(\omega):Y\rightarrow X
\end{equation}
is a holomorphic Fredholm family of index zero on $\Omega$.
Assume that it is invertible at least at one point.
Analytic Fredholm theory then gives a meromorphic inverse
\begin{equation}
    R_L(\omega)
    =
    L(\omega)^{-1}.
    \label{eq:RL}
\end{equation}
In simple terms, $R_L$ depends regularly on $\omega$ except at isolated
poles inside the frequency region considered here.  We use $R_L$ for
the outgoing inverse of the chosen boundary-value problem.  Branch
points and threshold frequencies are treated separately.

Assume also that
\begin{equation}
    T(\omega):Y\rightarrow Z
\end{equation}
is a holomorphic bounded operator family.

Define the physical pushed-forward response
\begin{equation}
    \boxed{
    \mathcal G_T(\omega)
    =
    T(\omega)R_L(\omega).}
    \label{eq:GT}
\end{equation}

\begin{theorem}[Pushed-forward resolvent]
\label{thm:pushres}
The family $\mathcal G_T$ is meromorphic on the same frequency domain
as $R_L$.

If $R_L$ is holomorphic at $\omega_0$, then $\mathcal G_T$ is also
holomorphic there.

Suppose instead that $R_L$ has a pole of order $m$,
\begin{equation}
    R_L(\omega)
    =
    \sum_{j=-m}^{\infty}
    R_j(\omega-\omega_0)^j,
    \qquad
    R_{-m}\neq0.
    \label{eq:RLaurent}
\end{equation}
If
\begin{equation}
    D(\cdot,\omega_0)\not\equiv0,
\end{equation}
then $\mathcal G_T$ has a pole of the same order $m$.
\end{theorem}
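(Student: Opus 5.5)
Meromorphy and holomorphy are immediate from composition; the work is in showing that the pole order is not lowered.

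Since $T(\omega)$ is a holomorphic bounded family, we expand $T(\omega)=\sum_{k\ge0}T_k(\omega-\omega_0)^k$ with $T_0=T(\omega_0)$. Multiplying this series by the Laurent series \eqref{eq:RLaurent} gives a Laurent series for $\mathcal G_T$ with coefficients in bounded operators $X\to Z$. Hence $\mathcal G_T$ is meromorphic wherever $R_L$ is, and it is holomorphic wherever $R_L$ is. At a pole of order $m$, the product has no terms below $(\omega-\omega_0)^{-m}$, and its leading coefficient is
\begin{equation}
    [\mathcal G_T]_{-m}=T(\omega_0)R_{-m}.
\end{equation}
So the pole order of $\mathcal G_T$ is at most $m$. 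It equals $m$ exactly when $T(\omega_0)R_{-m}\neq0$, and the whole proof reduces to establishing this.

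To show $T(\omega_0)R_{-m}\neq0$, we identify the range of $R_{-m}$. From $L(\omega)R_L(\omega)=I$ we compare the coefficient of $(\omega-\omega_0)^{-m}$ and obtain $L(\omega_0)R_{-m}=0$. Thus every vector in $\operatorname{Ran}R_{-m}$ is a homogeneous solution of $L(\omega_0)$ lying in the domain $Y$, which encodes the boundary conditions. Because $R_{-m}\neq0$, we can pick $f$ with $y:=R_{-m}f\neq0$. This $y$ is a nonzero element of $\ker L(\omega_0)$, and since it lies in $Y$ it is a genuine solution of the ODE. Now suppose $T(\omega_0)y=0$. Then $T|_{\ker L}$ has a nontrivial kernel, and Theorem~\ref{thm:global-kernel}, applied on the connected regular interval, forces $D(\cdot,\omega_0)\equiv0$. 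This contradicts the hypothesis, so $T(\omega_0)R_{-m}f\neq0$ and the pole of $\mathcal G_T$ has order exactly $m$.

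The main obstacle is the link between the abstract Banach-space setting and the ODE setting. We must check that elements of $\ker L(\omega_0)\subset Y$ are honest classical solutions on the whole interval, so that Theorem~\ref{thm:global-kernel} applies, and that $T(\omega_0)$ acting on $Y$ agrees with the differential expression $A+B\partial_x$ used in that theorem. I would assume that $Y$ embeds continuously into $C^1_{\mathrm{loc}}(I)$. Under that assumption, an element of $\ker L(\omega_0)$ is a $C^1$ weak solution, hence smooth by ODE regularity, and $T$ acts on it pointwise. If this embedding is not available, the fallback is to argue locally. Restrict $y$ to a subinterval where it is a classical solution. There $Ty\equiv0$ together with the Wronskian identity (Theorem~\ref{thm:wronskian}) forces $D\equiv0$ on that subinterval, and analyticity in $x$ then gives $D(\cdot,\omega_0)\equiv0$ everywhere, which is the same contradiction.
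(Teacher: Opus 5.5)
Your proposal is correct and follows the paper's proof essentially step for step. The leading Laurent coefficient of $TR_L$ is $T(\omega_0)R_{-m}$, the relation $L(\omega_0)R_{-m}=0$ puts $\operatorname{Ran}R_{-m}$ inside $\ker L(\omega_0)$, and Theorem~\ref{thm:global-kernel} makes $T(\omega_0)$ injective there whenever $D(\cdot,\omega_0)\not\equiv0$. Your remark that elements of $\ker L(\omega_0)\subset Y$ must be identified with classical solutions, and $T(\omega_0)$ with $A+B\partial_x$, makes explicit a step the paper leaves implicit; your fallback does not need analyticity in $x$, since such a solution is classical on the whole connected interval and the Wronskian argument of Theorem~\ref{thm:global-kernel} then gives $D(\cdot,\omega_0)\equiv0$ directly.
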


\begin{proof}
Since $T$ is holomorphic in $\omega$, multiplying the meromorphic
family $R_L$ by $T$ cannot create a pole at a frequency where $R_L$ is
regular.

Now expand
\begin{equation}
    T(\omega)
    =
    T_0+T_1(\omega-\omega_0)+\cdots.
\end{equation}
The highest negative power in
\begin{equation}
    L(\omega)R_L(\omega)=I
\end{equation}
gives
\begin{equation}
    L(\omega_0)R_{-m}=0.
\end{equation}
Hence
\begin{equation}
    \operatorname{Ran}R_{-m}
    \subset
    \ker L(\omega_0).
\end{equation}

By Theorem~\ref{thm:global-kernel},
\begin{equation}
    D(\cdot,\omega_0)\not\equiv0
\end{equation}
implies that $T_0$ is injective on the homogeneous solution space.
Since $R_{-m}\neq0$,
\begin{equation}
    T_0R_{-m}\neq0.
\end{equation}
This is the coefficient of
$(\omega-\omega_0)^{-m}$ in $TR_L$.
The pole order is therefore unchanged.
\end{proof}

This theorem gives a stronger statement than the Evans-function result:
an isolated determinant zero cannot create a new physical resolvent
pole, and it cannot remove an existing pole either.

For a simple pole,
\begin{equation}
    R_L(\omega)
    =
    \frac{P_0}{\omega-\omega_0}
    +
    R_{\rm reg}(\omega),
\end{equation}
we have
\begin{equation}
    \boxed{
    \operatorname*{Res}_{\omega=\omega_0}
    \mathcal G_T
    =
    T(\omega_0)P_0.}
    \label{eq:residue}
\end{equation}
The derivative $\partial_\omega T$ enters only the regular term.

A simple resonance can disappear from the observable only if
\begin{equation}
    T(\omega_0)P_0=0.
\end{equation}
Since the range of $P_0$ consists of homogeneous resonant states,
Theorem~\ref{thm:global-kernel} then gives
\begin{equation}
    D(\cdot,\omega_0)\equiv0.
\end{equation}
Thus a mode can be invisible to the observable only at a genuine
global transformation degeneracy.

\subsection{The two Green kernels}

Let $u_-$ and $u_+$ be the two boundary-adapted homogeneous solutions.
The original Green kernel is
\begin{equation}
    G_L(x,s;\omega)
    =
    \frac{
      u_-(x_<,\omega)
      u_+(x_>,\omega)
    }{
      W_L(s,\omega)
    },
    \label{eq:Gorig}
\end{equation}
where
\begin{equation}
    x_< =\min(x,s),
    \qquad
    x_> =\max(x,s).
\end{equation}

The physical observable kernel is simply
\begin{equation}
    \boxed{
    G_T(x,s;\omega)
    =
    T_xG_L(x,s;\omega).}
    \label{eq:physicalG}
\end{equation}
No factor $1/D$ appears in this expression.

Now construct instead the unit-$\delta$ Green kernel of the scalar
transformed equation.  With
\begin{equation}
    z_\pm=Tu_\pm,
\end{equation}
Theorem~\ref{thm:wronskian} gives
\begin{equation}
    W[z_-,z_+](s)
    =
    D(s,\omega)W_L(s,\omega).
\end{equation}
Hence
\begin{equation}
    \boxed{
    G_{\widetilde L}^{(\delta)}(x,s;\omega)
    =
    \frac{
      z_-(x_<,\omega)
      z_+(x_>,\omega)
    }{
      D(s,\omega)W_L(s,\omega)
    }.}
    \label{eq:Gtildedelta}
\end{equation}

This kernel can have a genuine $1/D$ divergence.
It is therefore important not to identify
$G_{\widetilde L}^{(\delta)}$ with the physical kernel
$T_xG_L$.

The difference comes from the source transformation.

\subsection{Exact cancellation of the artificial pole}

The formal adjoint of $\mathcal S$ with respect to the unweighted
$\dd s$ pairing is
\begin{equation}
    \mathcal S_s^\dagger
    =
    -B\partial_s
    +
    A+B'
    -
    B\frac{D'}{D}.
    \label{eq:Sadj}
\end{equation}

\begin{theorem}[Exact Green identity]
\label{thm:GreenExact}
Away from zeros of $D$,
\begin{equation}
    \boxed{
    \mathcal S_s^\dagger
    G_{\widetilde L}^{(\delta)}(x,s;\omega)
    =
    T_xG_L(x,s;\omega).}
    \label{eq:GreenExact}
\end{equation}

If $s_0$ is an isolated apparent zero of $D$, the full left-hand side
has the unique removable continuation defined by the regular
right-hand side, even though
$G_{\widetilde L}^{(\delta)}$ and the separate terms in
$\mathcal S^\dagger G_{\widetilde L}^{(\delta)}$ may diverge.
\end{theorem}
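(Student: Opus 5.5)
The plan is to verify \eqref{eq:GreenExact} by direct computation on each side of the diagonal $s=x$, using the inverse first-jet relation \eqref{eq:inverse} from Theorem~\ref{thm:transformed} as the key identity. After that, I would check that no distributional term is produced at $s=x$, and finally extend across isolated zeros of $D$ by continuity of the right-hand side.

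First, fix $x$ and consider the region $s<x$, where
\begin{equation*}
    G_{\widetilde L}^{(\delta)}(x,s)
    =
    z_+(x)\,\frac{z_-(s)}{D(s)W_L(s)},
    \qquad
    T_xG_L(x,s)
    =
    z_+(x)\,\frac{u_-(s)}{W_L(s)}.
\end{equation*}
The factor $z_+(x)$ is constant in $s$, so it suffices to prove the single-variable identity
\begin{equation*}
    \mathcal S_s^\dagger\!\left[\frac{z}{D\,W_L}\right]
    =
    \frac{y}{W_L}
\end{equation*}
for every homogeneous $y$ with $z=Ty$. The region $s>x$ is identical with the roles of $u_\pm$ exchanged.

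To prove the single-variable identity, set $f=z/(DW_L)$ and use Abel's relation $W_L'=-pW_L$. This gives
\begin{equation*}
    f'=\frac{z'+pz}{DW_L}-\frac{D'}{D}\,f .
\end{equation*}
Substitute this into \eqref{eq:Sadj}. The two $BD'/D$ terms cancel exactly, and what remains is
\begin{equation*}
    \mathcal S_s^\dagger f
    =
    \frac{(A+B'-Bp)z-Bz'}{DW_L}
    =
    \frac{Ez-Bz'}{DW_L}.
\end{equation*}
By \eqref{eq:inverse}, $(Ez-Bz')/D=y$, which is the claim. This cancellation is the whole mechanism: the $-BD'/D$ term in $\mathcal S^\dagger$, which came from $\widetilde p=p-D'/D$, absorbs the derivative of the $1/D$ in the kernel.

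Next I would treat the diagonal $s=x$. The kernel $G_{\widetilde L}^{(\delta)}$ is continuous in $s$ across $s=x$, since both one-sided expressions equal $z_-(x)z_+(x)/(D(x)W_L(x))$. Because $\mathcal S_s^\dagger$ is only first order, applying it in the distributional sense produces no $\delta(x-s)$ term. Both sides are therefore ordinary piecewise-smooth kernels, and they agree off the diagonal. I expect this step, together with bookkeeping the sign convention of $W_L(s)$ as a function of the source point, to be the main place where care is needed. I would check it by confirming that $T_xG_L$ has the same one-sided structure and no delta contribution, since $T_x$ is first order acting on a kernel continuous in $x$.

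Finally, for the continuation statement, let $s_0$ be an isolated zero of $D$ at an ordinary point. The right-hand side $u_\mp(s)z_\pm(x)/W_L(s)$ is holomorphic in $s$ near $s_0$, because $u_\pm$ are regular there and $W_L(s_0)\neq0$. The left-hand side equals it on a punctured neighbourhood of $s_0$, so the singularity of the left-hand side at $s_0$ is removable. Its continuation is unique by the identity theorem, or by continuity in the real-variable setting. This holds even though $G_{\widetilde L}^{(\delta)}$ itself, and the individual terms $-B\partial_s G^{(\delta)}$ and $(A+B'-BD'/D)G^{(\delta)}$, carry $1/D$ or $1/D^2$ factors that diverge at $s_0$.
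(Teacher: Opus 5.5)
Your proposal is correct and follows essentially the same route as the paper: you prove the single-variable identity $\mathcal S_s^\dagger[z/(DW_L)]=y/W_L$ using Abel's relation and the inverse first-jet formula, apply it on each side of $s=x$, and use continuity of the transformed kernel at the diagonal to rule out a delta term. Your removable-singularity argument for the continuation at an isolated zero of $D$ spells out what the paper's proof leaves implicit; it relies on the right-hand side being regular there and on $W_L(s_0)\neq0$.
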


\begin{proof}
Let $y$ be either $u_-$ or $u_+$ and let
\begin{equation}
    z=Ty.
\end{equation}
Since
\begin{equation}
    W_L'=-pW_L,
\end{equation}
a direct calculation gives
\begin{align}
    \mathcal S_s^\dagger
    \left(
      \frac{z}{DW_L}
    \right)
    &=
    \frac{
       -Bz'
       +(A+B'-Bp)z
    }{
       DW_L
    }
    \nonumber\\
    &=
    \frac{
       Ez-Bz'
    }{
       DW_L
    }.
\end{align}
The inverse first-jet relation gives
\begin{equation}
    \frac{Ez-Bz'}{D}=y.
\end{equation}
Therefore
\begin{equation}
    \boxed{
    \mathcal S_s^\dagger
    \left(
      \frac{z}{DW_L}
    \right)
    =
    \frac{y}{W_L}.}
    \label{eq:basicGreenid}
\end{equation}

Applying this relation to the $s$-dependent factor on the two sides of
$s=x$ gives \eqref{eq:GreenExact}.
The transformed Green kernel is continuous at $s=x$, so the
first-order adjoint operation does not create an additional delta
distribution there.
\end{proof}

Now let
\begin{equation}
    Ly=f.
\end{equation}
Then
\begin{equation}
    \widetilde L(Ty)
    =
    \mathcal Sf.
\end{equation}
For a smooth compactly supported source,
integration by parts gives
\begin{align}
    \int
    G_{\widetilde L}^{(\delta)}(x,s)
    (\mathcal Sf)(s)\,\dd s
    &=
    \int
    \mathcal S_s^\dagger
    G_{\widetilde L}^{(\delta)}(x,s)
    f(s)\,\dd s
    \nonumber\\
    &=
    T_x
    \int
    G_L(x,s)f(s)\,\dd s.
\end{align}
Thus
\begin{equation}
    \boxed{
    R_{\widetilde L}^{(\delta)}
    \mathcal S
    =
    TR_L}
    \label{eq:GreenCompose}
\end{equation}
on this source class.

For a noncompact source the boundary term
\begin{equation}
    \left[
      B(s)
      G_{\widetilde L}^{(\delta)}(x,s)
      f(s)
    \right]_{a}^{b}
\end{equation}
must be included unless the boundary conditions make it vanish.

For a delta source $f=\delta_{s_0}$ away from $D=0$,
\begin{equation}
    \left\langle
       G_{\widetilde L}^{(\delta)},
       \mathcal S\delta_{s_0}
    \right\rangle
    =
    \left.
    \mathcal S_s^\dagger
    G_{\widetilde L}^{(\delta)}
    \right|_{s=s_0}
    =
    T_xG_L(x,s_0).
\end{equation}
If $D(s_0)=0$, the correct object is the regular combined
continuation supplied by Theorem~\ref{thm:GreenExact}, rather than
term-by-term multiplication of singular distributions.

This Green identity is the main difference between the present
non-intertwining observable problem and the usual Green-function
relations for supersymmetric or Darboux partner operators
\cite{SamsonovSukumarPupasov2005,SamsonovPupasov2005}.


\section{Exceptional cases}
\label{sec:exceptions}

The previous theorems require several hypotheses.
It is useful to state clearly what happens when one of them fails.

\begin{table*}[t]
\caption{Main cases that must be distinguished in applying the spectral theorem.}
\label{tab:exceptions}
\begin{center}
\scriptsize
\begin{tabular}{llll}
\hline\hline
\parbox{0.21\textwidth}{Case} &
\parbox{0.22\textwidth}{Meaning} &
\parbox{0.23\textwidth}{Spectral effect} &
\parbox{0.25\textwidth}{What must be done} \\
\hline
\parbox{0.21\textwidth}{$\mathcal E_L(\omega_0)=0$} &
\parbox{0.22\textwidth}{Original resonance} &
\parbox{0.23\textwidth}{Genuine spectral zero and resolvent pole} &
\parbox{0.25\textwidth}{No special transformation issue} \\
\parbox{0.21\textwidth}{$D(x_0,\omega_0)=0$, $D\not\equiv0$} &
\parbox{0.22\textwidth}{Isolated local degeneracy} &
\parbox{0.23\textwidth}{No new pole and no lost pole under the theorem hypotheses} &
\parbox{0.25\textwidth}{Continue the physical transformed solutions through the apparent singularity} \\
\parbox{0.21\textwidth}{$D(\cdot,\omega_0)\equiv0$} &
\parbox{0.22\textwidth}{Global transformation degeneracy} &
\parbox{0.23\textwidth}{Spectral equivalence may fail} &
\parbox{0.25\textwidth}{Check the kernel of $T$ on the homogeneous solution space} \\
\parbox{0.21\textwidth}{$N_-(\omega_0)=0$ or $N_+(\omega_0)=0$} &
\parbox{0.22\textwidth}{Leading endpoint term cancels} &
\parbox{0.23\textwidth}{Unit Jost normalization may fail} &
\parbox{0.25\textwidth}{Recompute the subleading endpoint asymptotics} \\
\parbox{0.21\textwidth}{Threshold or branch point} &
\parbox{0.22\textwidth}{Meromorphic description may fail} &
\parbox{0.23\textwidth}{Laurent-pole theorem does not apply directly} &
\parbox{0.25\textwidth}{Use a separate threshold or branch analysis} \\
\parbox{0.21\textwidth}{$D$-zero reaches an endpoint or an original singular point} &
\parbox{0.22\textwidth}{Ordinary-point assumptions fail} &
\parbox{0.23\textwidth}{No conclusion from the local theorem alone} &
\parbox{0.25\textwidth}{Redo the local or endpoint asymptotic analysis} \\
\hline\hline
\end{tabular}
\end{center}
\end{table*}

Two further points are worth keeping in mind.  If $A$ or $B$ has a pole
as a function of $\omega$, then the observable itself may introduce an
operator pole that is not a pole of the original spectral problem.
Also, if new boundary conditions are imposed after the transformation
instead of transporting the original boundary lines, one has defined a
different spectral problem and the invariance theorem does not apply.

A transformation frequency described as ``algebraically special'' is
not automatically exceptional for every observable.
It becomes exceptional only if the particular transformation develops
a kernel, a normalization failure, or another degeneracy there.

This point is important in comparing a general observable with a
Chandrasekhar or Darboux transformation.


\section{Schwarzschild applications}
\label{sec:schwarzschild}

We now apply the general results to the two standard gravitational
master equations of Schwarzschild spacetime
\cite{ReggeWheeler1957,Zerilli1970a,Zerilli1970b,
Moncrief1974,Chandrasekhar1983}.

The purpose is not to recompute their quasinormal spectra.
We use them because the derivative observable gives a direct physical
example of determinant zeros.

\subsection{Regge--Wheeler equation}
\label{sec:rw}

For axial perturbations with $\ell\ge2$,
\begin{equation}
    \frac{\dd^2\Psi_{\rm RW}}{\dd r_*^2}
    +
    \left[
       \omega^2-V_{\rm RW}(r)
    \right]
    \Psi_{\rm RW}
    =0,
    \label{eq:RWeq}
\end{equation}
where
\begin{equation}
    f(r)=1-\frac{2M}{r},
\end{equation}
and
\begin{equation}
    V_{\rm RW}(r)
    =
    f(r)
    \left[
       \frac{\ell(\ell+1)}{r^2}
       -
       \frac{6M}{r^3}
    \right].
    \label{eq:RWpot}
\end{equation}

Take
\begin{equation}
    T=\partial_{r_*}.
\end{equation}
Then
\begin{equation}
    \boxed{
    D_{\rm RW}(r,\omega)
    =
    \omega^2-V_{\rm RW}(r).}
    \label{eq:DRW}
\end{equation}
Thus the determinant zeros are exactly the ordinary turning points of
the original Regge--Wheeler equation.

For real nonzero $\omega$, the potential is positive outside the
horizon and tends to zero at both ends.
Writing
\begin{equation}
    L_\ell=\ell(\ell+1),
    \qquad
    x=\frac{r}{M},
\end{equation}
we have
\begin{equation}
    M^2V_{\rm RW}
    =
    \frac{
       (x-2)(L_\ell x-6)
    }{x^4}.
\end{equation}
Differentiating gives
\begin{equation}
    \frac{\dd}{\dd x}
    \left(
      M^2V_{\rm RW}
    \right)
    =
    -\frac{2}{x^5}
    \left[
       L_\ell x^2
       -(3L_\ell+9)x
       +24
    \right].
\end{equation}
The potential has one exterior maximum.
Its location is
\begin{equation}
    x_{\rm pk}^{\rm RW}
    =
    \frac{
       3(L_\ell+3)
       +
       \sqrt{
          9(L_\ell+3)^2-96L_\ell
       }
    }{
       2L_\ell
    }.
    \label{eq:RWpeak}
\end{equation}

Therefore:

\begin{enumerate}
    \item if
    \begin{equation}
        0<\omega^2<V_{\max},
    \end{equation}
    there are two simple exterior turning points;

    \item if
    \begin{equation}
        \omega^2=V_{\max},
    \end{equation}
    they merge into one double turning point;

    \item if
    \begin{equation}
        \omega^2>V_{\max},
    \end{equation}
    there is no real exterior turning point.
\end{enumerate}

For the derivative equation, the two simple turning points have
Frobenius exponents
\begin{equation}
    0,\qquad2,
\end{equation}
while the double turning point has
\begin{equation}
    0,\qquad3.
\end{equation}

For complex quasinormal frequencies, the turning points generally move
into the complex $r$-plane.
They should not be treated as real radial turning points unless the
frequency makes $\omega^2$ real and positive.

At nonzero frequency the physical boundary solutions satisfy
\begin{equation}
    \Psi_{\rm in}
    \sim
    e^{-i\omega r_*},
    \qquad
    r_*\rightarrow-\infty,
\end{equation}
and
\begin{equation}
    \Psi_{\rm up}
    \sim
    e^{+i\omega r_*},
    \qquad
    r_*\rightarrow+\infty.
\end{equation}
Differentiation gives
\begin{equation}
    N_-=-i\omega,
    \qquad
    N_+=+i\omega.
    \label{eq:RWN}
\end{equation}

Hence, for $\omega\neq0$, the derivative map preserves the ingoing
and outgoing boundary lines.

\begin{corollary}[Regge--Wheeler derivative observable]
\label{cor:RW}
On a nonzero-frequency domain where the outgoing Regge--Wheeler
inverse is meromorphic,
\begin{equation}
    T=\partial_{r_*}
\end{equation}
has no global determinant degeneracy at finite $\omega$ because
$V_{\rm RW}(r)$ is not constant.

Every finite turning point
\begin{equation}
    \omega^2=V_{\rm RW}(r_0)
\end{equation}
is therefore an isolated determinant zero.
It produces an apparent singularity of the derivative equation but
does not create a quasinormal-mode zero and does not change the pole
order of the physical derivative response.
\end{corollary}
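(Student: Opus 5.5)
The corollary is a specialization of the general machinery, so the plan is to verify the hypotheses of Theorems~\ref{thm:global-kernel}, \ref{thm:localdeg} (via Theorem~\ref{thm:turning}), \ref{thm:evans} and \ref{thm:pushres} for $T=\partial_{r_*}$ acting on the Regge--Wheeler operator $L=\partial_{r_*}^2+\omega^2-V_{\rm RW}$. Here $p=0$, $A=0$, $B=1$, so $D_{\rm RW}=q=\omega^2-V_{\rm RW}(r)$ by \eqref{eq:DRW}. All coefficients are entire in $\omega$ and real-analytic in $r_*$ on the exterior, so hypothesis~1 of Theorem~\ref{thm:evans} holds once the boundary solutions $\Psi_{\rm in},\Psi_{\rm up}$ are taken holomorphic on the chosen nonzero-frequency domain. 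That domain excludes $\omega=0$ and the branch cut of $\Psi_{\rm up}$, as in the standing assumptions of Sec.~\ref{sec:global}.

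\textbf{Step 1 (no global degeneracy).} If $D(\cdot,\omega_0)\equiv0$, then $V_{\rm RW}(r)\equiv\omega_0^2$ on the exterior. This is impossible, because $V_{\rm RW}$ is a nonconstant rational function of $r$: it vanishes at $r=2M$ and is positive at the peak \eqref{eq:RWpeak}. Hence $D(\cdot,\omega)\not\equiv0$ for every finite $\omega$, and by Theorem~\ref{thm:global-kernel} the map $\partial_{r_*}$ is injective on $\ker L(\omega)$. The nonconstancy also shows that for fixed $\omega$ the zeros of $D$ in $r$ are isolated. They are the roots of a polynomial in $r$ after clearing denominators, so they are finite in number, and each has some finite multiplicity $m$. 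Theorem~\ref{thm:turning} then gives an apparent regular singularity with exponents $0,m+1$, and the transformed physical solutions $\Psi_{\rm in}',\Psi_{\rm up}'$ extend through it, because they are derivatives of analytic functions. This verifies hypotheses~2--4.

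\textbf{Step 2 (boundary lines and endpoints).} By \eqref{eq:RWN}, $N_\mp=\mp i\omega\neq0$ for $\omega\neq0$. Therefore $T\ell_\pm$ are nonzero and retain the ingoing and outgoing asymptotic types, which gives hypothesis~5. For hypothesis~6, the determinant zeros must stay away from $r_*\to\pm\infty$. Since $V_{\rm RW}\to0$ at both ends, $D\to\omega^2\neq0$ there. By continuity, on a small neighbourhood of any fixed $\omega_0\neq0$ the zeros remain in a compact set of finite $r_*$. If they are complex, they also stay away from the original singular points $r=0$ and $r=2M$. This needs a short check: $D=\omega^2$ at $r=2M$, while $D$ has a pole at $r=0$, so neither of these points is a zero of $D$. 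Theorem~\ref{thm:evans} then gives $\mathcal E_T^{\rm img}=D(x_*,\omega)\mathcal E_L$, with the same zeros and orders, so no quasinormal-mode zero is created.

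\textbf{Step 3 (resolvent).} Theorem~\ref{thm:pushres} applies, using Step~1 and the assumed meromorphy of the outgoing inverse. It shows that $\partial_{r_*}R_L$ has exactly the poles of $R_L$, with the same orders.

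The main obstacle is the endpoint hypothesis in Step~2 for complex $\omega$. Complex turning points and the behaviour of $\Psi_{\rm up}$ near the branch cut require care, and the statement's restriction to nonzero-frequency domains where the inverse is meromorphic is what makes this step work. I would argue it purely through the limit $D\to\omega^2$ at both ends, rather than by locating the complex turning points explicitly.
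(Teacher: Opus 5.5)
Your proposal is correct and follows the paper's route. You identify $D_{\rm RW}=\omega^2-V_{\rm RW}$, use the nonconstancy of $V_{\rm RW}$ to exclude $D(\cdot,\omega)\equiv0$ via Theorem~\ref{thm:global-kernel}, use $N_\pm=\pm i\omega\neq0$ to transport the boundary lines, and then apply Theorems~\ref{thm:turning}, \ref{thm:evans} and \ref{thm:pushres}; your explicit check of hypothesis~6 (via $D\to\omega^2$ at both ends and $D\neq0$ at $r=0,2M$) is more careful than the paper, which leaves that point implicit.
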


The frequency
\begin{equation}
    \omega=0
\end{equation}
must be excluded from this corollary.
The endpoint factors in \eqref{eq:RWN} vanish there, and the
Schwarzschild frequency-domain Green function has nontrivial
low-frequency and branch-cut structure
\cite{Leaver1986,ChingLeungSuenYoung1995,
Andersson1997,CasalsOttewill2012}.

\subsection{Zerilli equation}
\label{sec:zerilli}

For even-parity perturbations define
\begin{equation}
    n=\frac{(\ell-1)(\ell+2)}{2}.
\end{equation}
The Zerilli equation is
\begin{equation}
    \frac{\dd^2\Psi_Z}{\dd r_*^2}
    +
    \left[
      \omega^2-V_Z(r)
    \right]
    \Psi_Z
    =0,
    \label{eq:Zeq}
\end{equation}
with
\begin{align}
    V_Z
    ={}&
    \frac{2f}{r^3(nr+3M)^2}
    \Bigl[n^2(n+1)r^3+3n^2Mr^2
    \nonumber\\
    &\hspace{2.0cm}+9nM^2r+9M^3\Bigr].
    \label{eq:Zpot}
\end{align}

For $T=\partial_{r_*}$,
\begin{equation}
    \boxed{
    D_Z(r,\omega)=\omega^2-V_Z(r).}
\end{equation}
The Zerilli potential is positive in the exterior, tends to zero at
both ends, and has one exterior maximum for every $\ell\ge2$.  A short
proof of the last statement is given in Appendix~\ref{app:Zpeak}.
Therefore the real turning-point pattern is the same as in the
Regge--Wheeler case: two simple turning points below the maximum, one
double turning point at the maximum, and none above it.  The actual
turning-point radii are different because $V_Z$ and $V_{\rm RW}$ are
different potentials; this has no spectral consequence.

The endpoint factors are again
\begin{equation}
    N_-=-i\omega,
    \qquad
    N_+=+i\omega.
\end{equation}
Hence the general theorem applies without modification at nonzero
frequency.

\begin{corollary}[Zerilli derivative observable]
\label{cor:Z}
On every nonzero-frequency meromorphic Jost domain, finite zeros of
$\omega^2-V_Z(r)$ are apparent singularities of the derivative equation.
They do not change the Zerilli resonance zeros or the pole order of the
physical derivative response.  The zero-frequency limit requires a
separate threshold analysis.
\end{corollary}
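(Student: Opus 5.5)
The plan is to reduce Corollary~\ref{cor:Z} to the general results, Theorems~\ref{thm:turning}, \ref{thm:evans} and~\ref{thm:pushres}, following the Regge--Wheeler argument of Corollary~\ref{cor:RW}. That requires verifying their hypotheses for the Zerilli equation with $T=\partial_{r_*}$, where $A=0$, $B=1$, $p=0$ and $D_Z=Q=\omega^2-V_Z$.

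\textbf{Step 1: local nature of the zeros.} The equation~\eqref{eq:Zeq} is regular at every exterior point $2M<r<\infty$, that is, at every finite $r_*$. The factor $(nr+3M)^2$ does not vanish there, and $V_Z$ is real-analytic in $r$, hence in $r_*$. Every finite zero $r_0$ of $\omega^2-V_Z$ is therefore an ordinary point. Theorem~\ref{thm:turning} then gives an apparent regular singularity of the derivative equation~\eqref{eq:Xeq}, with exponents $0,m+1$. The transformed physical solutions $X=\Psi'$ are analytic through $r_0$, which verifies hypotheses 3 and 4 of Theorem~\ref{thm:evans}.

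\textbf{Step 2: no global degeneracy.} If $D_Z(\cdot,\omega_0)\equiv0$, then $V_Z$ would be constant and equal to $\omega_0^2$. But $V_Z$ is nonconstant: it is positive in the exterior and tends to $0$ at both ends, as noted above and in Appendix~\ref{app:Zpeak}. So $D_Z(\cdot,\omega)\not\equiv0$ for every finite $\omega$, and by Theorem~\ref{thm:global-kernel} the derivative map has trivial kernel on homogeneous solutions. Any zero of $D_Z$ at one frequency is therefore isolated in $r_*$, since $D_Z$ is analytic and not identically zero.

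\textbf{Step 3: boundary lines and endpoints.} Since $V_Z\to0$ at both ends, the endpoint factors are $N_\mp=\mp i\omega$, which are nonzero for $\omega\neq0$. So $T\ell_\pm$ are nonzero and keep the ingoing and outgoing asymptotic types, giving hypothesis 5. For hypothesis 6, note that $\omega^2-V_Z\to\omega^2\neq0$ at both ends. On a compact frequency neighbourhood avoiding $0$, the zeros of $D_Z$ therefore stay in a compact $r_*$ range and cannot escape to $r_*=\pm\infty$.

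\textbf{Step 4: conclusion.} Theorem~\ref{thm:evans} gives $\mathcal E_T^{\rm img}=D_Z(x_*,\omega)\mathcal E_L$, so the Zerilli resonance zeros and their orders are unchanged. Theorem~\ref{thm:pushres}, using Step 2, preserves the pole order of $\partial_{r_*}R_L$. The exclusion of $\omega=0$ follows because $N_\pm$ vanish there and the branch-cut structure lies outside the meromorphic setup.

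The main obstacle I expect is hypothesis 6 for complex $\omega$. Holomorphic dependence of the boundary solutions requires decay of $V_Z$, which holds on the meromorphic Jost domain by assumption. The other concern is that turning points for complex $\omega$ leave the real axis. I would handle this by working on the real $r_*$ interval, where $D_Z\neq0$ generically, and choosing $x_*$ with $D_Z(x_*,\omega_0)\neq0$, which is always possible by Step 2.
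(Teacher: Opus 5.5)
Your proposal is correct and follows essentially the paper's route: the paper notes that $D_Z=\omega^2-V_Z$ is nonconstant, that its finite zeros lie at ordinary points, and that $N_\mp=\mp i\omega\neq0$ for $\omega\neq0$, and then says the general results (Theorems~\ref{thm:turning}, \ref{thm:evans} and~\ref{thm:pushres}) apply ``without modification'', exactly as in Corollary~\ref{cor:RW}. Your hypothesis-by-hypothesis check, including the argument that determinant zeros stay in a compact $r_*$ range for $\omega$ bounded away from $0$, is a more explicit version of that same argument; the unique-maximum result of Appendix~\ref{app:Zpeak} that the paper cites is only needed to describe the real turning-point pattern, not for the corollary itself.
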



\section{Darboux and Chandrasekhar transformations}
\label{sec:darboux}

A true Darboux transformation is more restrictive than the general
observable studied here
\cite{Darboux1882,Crum1955,Sukumar1985,
CooperKhareSukhatme1995}.

Take first
\begin{equation}
    B=1,
    \qquad
    T=\partial_x+F(x).
\end{equation}
Then
\begin{equation}
    D
    =
    q-F'-pF+F^2.
\end{equation}
The general source identity becomes
\begin{equation}
    \widetilde LT
    =
    \left(
       T-\frac{D'}{D}
    \right)L.
\end{equation}
Therefore
\begin{equation}
    \widetilde LT=TL
\end{equation}
requires
\begin{equation}
    D'=0.
\end{equation}

For a general nonzero $B$, the condition is
\begin{equation}
    \boxed{
    \left(
      \frac{D}{B^2}
    \right)'=0.}
    \label{eq:DarbouxCondition}
\end{equation}

This gives a simple algebraic test for when a general observable is
also an intertwiner.

Consider the Schr\"odinger family
\begin{equation}
    L(\omega)
    =
    \partial_x^2
    +
    \omega^2-V(x).
\end{equation}
Let $\phi$ be a Darboux seed at the frequency $\omega_*$,
\begin{equation}
    \phi''
    +
    \left[
       \omega_*^2-V
    \right]\phi
    =0,
\end{equation}
and take
\begin{equation}
    T
    =
    \partial_x
    -
    \frac{\phi'}{\phi}.
\end{equation}
A direct calculation gives
\begin{equation}
    \boxed{
    D
    =
    \omega^2-\omega_*^2.}
    \label{eq:Dseed}
\end{equation}

The determinant is independent of $x$.
At
\begin{equation}
    \omega^2=\omega_*^2,
\end{equation}
it vanishes everywhere and
\begin{equation}
    T\phi=0.
\end{equation}
Thus the usual deleted or exceptional Darboux state is exactly the
global-degeneracy case
\begin{equation}
    D(\cdot,\omega)\equiv0
\end{equation}
of the present theory.

The Regge--Wheeler and Zerilli equations are connected by the
Chandrasekhar transformation
\cite{Chandrasekhar1975,Chandrasekhar1983}.  Its Darboux form was
discussed in Ref.~\cite{Glampedakis2017}.  The algebraically special
frequency was studied in Refs.~\cite{Leung1999,Leung2001,Maassen2000}.
We do not repeat those results here.

The important point is simpler.
A genuine Chandrasekhar/Darboux map is a special member of the larger
class
\begin{equation}
    A+B\partial_x.
\end{equation}
Because it satisfies an intertwining condition, it can have stronger
spectral and Green-function identities than a generic observable.

This also explains why generalized first-order maps need care.
If one transforms the equation but imposes new boundary conditions
that are not the images of the old ones, QNM isospectrality can fail
\cite{Yurov2019}.
Our global theorem avoids this problem by following the physical
boundary lines under $T$.


\section{Implication for Teukolsky-type radial equations}
\label{sec:kerr}

The main motivation for this work came from differentiated radial
equations in Kerr perturbation theory.

The Teukolsky equation separates into radial and angular equations
\cite{Teukolsky1972,Teukolsky1973,
PressTeukolsky1973}.
The radial equation can be written in confluent-Heun form, and analytic
representations are also available through the Mano--Suzuki--Takasugi
method
\cite{ManoSuzukiTakasugi1996,ManoTakasugi1997}.

Aly and Stojkovic studied the equation in horizon-penetrating
coordinates and found that the equation satisfied by the first radial
derivative contains an additional regular singular point
\cite{AlyStojkovic2023}.
Their reconstructed metric quantities contain both the radial function
and its derivative.

Write a normalized radial equation as
\begin{equation}
    R''
    +
    p(r,\omega)R'
    +
    q(r,\omega)R
    =0.
    \label{eq:TeukNorm}
\end{equation}
For
\begin{equation}
    T=\partial_r,
\end{equation}
the determinant is
\begin{equation}
    \boxed{
    D=q.}
\end{equation}

Suppose
\begin{equation}
    q(r_0,\omega_0)=0
\end{equation}
at an ordinary point of the original radial equation, and suppose the
zero is isolated.
Then the equation for $R'$ has an apparent singularity.
For a simple zero its exponents are
\begin{equation}
    0,\qquad2.
\end{equation}

The global conclusion follows from the same theorem.
If
\begin{equation}
    q(r_0,\omega_0)=0,
    \qquad
    q(\cdot,\omega_0)\not\equiv0,
\end{equation}
this equality by itself is not a QNM condition and does not produce a
pole of
\begin{equation}
    \partial_rR_L(\omega).
\end{equation}

A reconstructed quantity of the form
\begin{equation}
    \mathcal H
    =
    a(r,\omega)R
    +
    b(r,\omega)R'
\end{equation}
also remains regular at the apparent derivative singularity as long as
$a$ and $b$ are regular there and no separate physical singularity is
present.

This does not yet prove a complete Kerr spectral theorem for every
Teukolsky-derived observable.
For that, one must also check:

\begin{enumerate}
    \item the transformed horizon boundary line;

    \item the transformed infinity boundary line;

    \item the chosen analytic continuation in $\omega$;

    \item threshold or superradiant endpoint degeneracies;

    \item possible transformation constants, including
    Teukolsky--Starobinsky factors
    \cite{StarobinskyChurilov1973,TeukolskyPress1974};

    \item branch points of the outgoing inverse.
\end{enumerate}

Transformations such as the Sasaki--Nakamura map are designed
specifically to improve the asymptotic radial problem
\cite{SasakiNakamura1982,Nakano2016}.
Whiting's mode-stability transformation is another example where the
transformation is adapted to the global Kerr boundary problem
\cite{Whiting1989}.

The conclusion here is limited to this question.  An isolated extra
singular point in the derivative equation is not, by itself, a new
spectral singularity.  A problem can occur only if the transformation
also fails globally, or if one of the endpoint or analytic assumptions
used above breaks down.


\section{Symbolic and numerical checks}
\label{sec:numerics}

The main results of this paper are analytic.
We use symbolic and numerical calculations only as consistency checks.

The identities
\begin{equation}
    \widetilde LT=\mathcal SL,
\end{equation}
\begin{equation}
    UT
    =
    I-\frac{B^2}{D}L,
\end{equation}
and
\begin{equation}
    TU
    =
    I-\frac{B^2}{D}\widetilde L
\end{equation}
were verified by direct symbolic expansion of all derivative
coefficients.

We also use the Regge--Wheeler equation to check the continuation
through a turning point and the Green-function cancellation.

\subsection{Continuation through a turning point}

Take
\begin{equation}
    M=1,
    \qquad
    \ell=2,
    \qquad
    \omega=0.2.
\end{equation}
The inner real turning point is
\begin{equation}
    r_0
    =
    2.120267923466549.
\end{equation}

Let
\begin{equation}
    X=\frac{\dd\Psi}{\dd r_*}.
\end{equation}
Instead of integrating the singular scalar equation for $X$ directly
through the turning point, we use the regular first-order system
\begin{equation}
    \frac{\dd\Psi}{\dd r}
    =
    \frac{X}{f},
\end{equation}
\begin{equation}
    \frac{\dd X}{\dd r}
    =
    -\frac{Q\Psi}{f}.
\end{equation}

With the normalization
\begin{equation}
    \Psi(3)=1,
    \qquad
    X(3)=0.2,
\end{equation}
the regular system gives
\begin{equation}
    \Psi(r_0)
    =
    0.8329475239552365,
\end{equation}
and
\begin{equation}
    X(r_0)
    =
    -0.0432096167335503.
\end{equation}

The local series for the derivative begins as
\begin{align}
    X(t)
    ={}&
    -0.0432096167
    +2.1500600225\,t^2
    \nonumber\\
    &
    -14.06526669\,t^3
    +92.2495386\,t^4
    +\cdots.
\end{align}
There is no linear term, in agreement with
\begin{equation}
    X'(r_0)=0
\end{equation}
for a simple turning point.

Using DOP853 with relative tolerance
$2\times10^{-12}$ and absolute tolerance
$2\times10^{-14}$ gives the convergence shown in
Table~\ref{tab:localconv}.

\begin{table}[ht]
\caption{
Comparison of the local Frobenius series with direct integration across
the inner Regge--Wheeler turning point.
}
\label{tab:localconv}
\begin{ruledtabular}
\begin{tabular}{ccc}
series order
&
maximum error, right
&
maximum error, left
\\
4
&
$2.12\times10^{-4}$
&
$2.89\times10^{-5}$
\\
6
&
$1.44\times10^{-6}$
&
$1.95\times10^{-7}$
\\
8
&
$9.58\times10^{-9}$
&
$1.30\times10^{-9}$
\\
10
&
$6.37\times10^{-11}$
&
$8.66\times10^{-12}$
\end{tabular}
\end{ruledtabular}
\end{table}

The calculation confirms that the physical derivative passes smoothly
through the singular point of its scalar equation.

\subsection{Green response}

We next test the global identity.
We use the tortoise-coordinate interval
\begin{equation}
    r_*\in[-25,70]
\end{equation}
with the same values
\begin{equation}
    M=1,
    \qquad
    \ell=2,
    \qquad
    \omega=0.2.
\end{equation}
The two real turning points are
\begin{equation}
    r_{*1}
    =
    -3.5020930964663,
\end{equation}
and
\begin{equation}
    r_{*2}
    =
    13.366435440180.
\end{equation}

The left homogeneous solution is initialized by
\begin{equation}
    u_-(-25)
    =
    e^{-i\omega(-25)},
\end{equation}
\begin{equation}
    u_-'(-25)
    =
    -i\omega u_-(-25),
\end{equation}
while the right solution is initialized by
\begin{equation}
    u_+(70)
    =
    e^{+i\omega70},
\end{equation}
\begin{equation}
    u_+'(70)
    =
    +i\omega u_+(70).
\end{equation}

Both solutions are integrated with DOP853 using
relative tolerance $10^{-11}$, absolute tolerance $10^{-13}$,
and maximum step $0.2$.

We choose the smooth compact source
\begin{equation}
 f_s(x)=
 \begin{cases}
 \displaystyle
 \exp
 \left[
  -\frac{1}{
     1-((x-5)/2)^2
  }
 \right],
 & |x-5|<2,
 \\[2mm]
 0,
 & \text{otherwise}.
 \end{cases}
 \label{eq:bumpsource}
\end{equation}

The support does not meet either turning point.
We compute the original outgoing response
\begin{equation}
    \Psi=R_Lf_s
\end{equation}
and differentiate it directly.

Independently, for
\begin{equation}
    T=\partial_x,
\end{equation}
we use
\begin{equation}
    \mathcal Sf_s
    =
    f_s'
    -
    \frac{Q'}{Q}f_s
\end{equation}
inside the transformed Green representation.

The two answers converge to one another.
With composite trapezoidal quadrature the relative discrepancy
decreases approximately by a factor of four when the mesh size is
halved, as expected for a second-order quadrature rule.

\begin{table}[ht]
\caption{
Convergence of the transformed Green response to the direct derivative
of the original response.
}
\label{tab:greenconv}
\begin{ruledtabular}
\begin{tabular}{cc}
grid points
&
maximum relative discrepancy
\\
3001
&
$2.15\times10^{-4}$
\\
6001
&
$5.39\times10^{-5}$
\\
12001
&
$1.35\times10^{-5}$
\\
24001
&
$3.37\times10^{-6}$
\\
48001
&
$8.42\times10^{-7}$
\end{tabular}
\end{ruledtabular}
\end{table}

The physical derivative response remains smooth at both turning points.

Finally, we examine the singular unit-source transformed kernel near
the second turning point.
For fixed $x=0<r_{*2}$,
\begin{equation}
    G_{\widetilde L}^{(\delta)}(x,s)
    =
    \frac{
      u_-'(x)u_+'(s)
    }{
      Q(s)W
    }.
\end{equation}
As $s\to r_{*2}$ this kernel grows like $1/Q$.
The two separate terms in
\begin{equation}
    \left(
      -\partial_s-\frac{Q'}{Q}
    \right)
    G_{\widetilde L}^{(\delta)}
\end{equation}
are even larger, but they cancel.

\begin{table*}[t]
\caption{
Cancellation of the artificial transformed Green singularity near the
second Regge--Wheeler turning point.
}
\label{tab:kernelcancel}
\begin{ruledtabular}
\begin{tabular}{ccccc}
$\epsilon=s-r_{*2}$
&
$|G_{\widetilde L}^{(\delta)}|$
&
$|-\partial_sG_{\widetilde L}^{(\delta)}|$
&
$|-(Q'/Q)G_{\widetilde L}^{(\delta)}|$
&
combined value
\\
$10^{-1}$
&
$7.79$
&
$7.73\times10^{1}$
&
$7.73\times10^{1}$
&
$3.344\times10^{-2}$
\\
$10^{-2}$
&
$7.73\times10^{1}$
&
$7.73\times10^{3}$
&
$7.73\times10^{3}$
&
$3.359\times10^{-2}$
\\
$10^{-3}$
&
$7.73\times10^{2}$
&
$7.73\times10^{5}$
&
$7.73\times10^{5}$
&
$3.360\times10^{-2}$
\\
$10^{-4}$
&
$7.73\times10^{3}$
&
$7.73\times10^{7}$
&
$7.73\times10^{7}$
&
$3.360\times10^{-2}$
\end{tabular}
\end{ruledtabular}
\end{table*}

The numerical result is not part of the proof.
It only shows directly what Theorem~\ref{thm:GreenExact} predicts:
the unit-source scalar kernel is singular, while the physical composed
kernel is regular.


\section{Discussion and conclusion}
\label{sec:discussion}

We have studied first-order differential observables of
parameter-dependent second-order wave equations.  The main point is
that a zero of the transformation determinant at one spatial point is
not a spectral condition.  Under the assumptions used in this paper,
such a zero can create an apparent singularity of the transformed
scalar equation, but it does not create or remove a pole of the
physical transformed response.

A genuine failure is stronger.  It can occur when the transformation
loses a homogeneous solution globally, when the transported boundary
line fails, or when the frequency reaches a threshold, branch point, or
another exceptional parameter excluded from the theorem.  On a
connected regular domain, the loss of a homogeneous solution is
equivalent to $D(\cdot,\omega)$ vanishing identically.

The Green-function calculation makes the same distinction from the
inhomogeneous side.  The unit-source Green function of the transformed
scalar equation can contain a $1/D$ singularity.  This does not mean
that the physical observable is singular.  The source is transformed
at the same time as the field, and the adjoint Green identity proved in
Sec.~\ref{sec:resolvent} shows the cancellation explicitly.

Several ingredients used here are classical.  Apparent singularities
from scalar reductions, derivative Heun equations, Evans and Jost
functions, and Green functions of Darboux partners are all established
subjects
\cite{DubrovinMazzocco2007,vanderPutSaito2009,Filipuk2020,
AlexanderGardnerJones1990,GesztesyLatushkinMakarov2007,
SamsonovSukumarPupasov2005,SamsonovPupasov2005}.
The new part of this paper is to connect these ideas for a general
first-order observable that does not have to be an intertwiner.  We
prove a global kernel criterion, a spectral result for the transformed
boundary solutions, preservation of resolvent pole order, and an exact
Green-function cancellation identity.

For the Regge--Wheeler and Zerilli derivative observables, finite
turning points are therefore apparent singularities rather than
nonzero-frequency QNM conditions.  The point $\omega=0$ remains
separate because the endpoint normalization degenerates and the
Schwarzschild Green function has threshold and branch structure.  The
same local conclusion applies to an isolated zero in a differentiated
Teukolsky radial equation, but a full Kerr spectral statement would
also require a separate check of the horizon and infinity boundary
maps and any special transformation constants.

The framework is not limited to black holes.  It applies to other
second-order scattering problems whenever the physical quantity of
interest is a first-order differential observable and the outgoing
inverse has a meromorphic realization.  The practical lesson is simple:
a singularity in the scalar equation for an observable should not be
identified with a physical spectral singularity until the global
boundary problem and the transformed source have also been checked.


\appendix

\section{Expanded coefficient identities}
\label{app:coefficients}

For reference,
\begin{equation}
    C=A'-Bq,
\end{equation}
\begin{equation}
    E=A+B'-Bp,
\end{equation}
\begin{equation}
    F=C'-Eq,
\end{equation}
and
\begin{equation}
    G=C+E'-Ep.
\end{equation}

Expanding $F$ gives
\begin{equation}
    F
    =
    A''
    -
    B'q
    -
    Bq'
    -
    Aq
    -
    B'q
    +
    Bpq.
\end{equation}
Similarly,
\begin{equation}
    G
    =
    (A'-Bq)
    +
    (A'+B''-B'p-Bp')
    -
    (A+B'-Bp)p.
\end{equation}

The compact definitions are more useful in the main text because
\begin{equation}
    \widetilde q
    =
    \frac{GC-FE}{D}
\end{equation}
then remains readable.

A direct differentiation of
\begin{equation}
    D=AE-BC
\end{equation}
also gives
\begin{equation}
    FB-GA
    =
    pD-D'.
\end{equation}
This is the algebraic form of the Abel-identity check used in
Eq.~\eqref{eq:ptilde}.

The first B\'ezout identity follows directly from
\begin{align}
    (E-B\partial_x)(A+B\partial_x)
    &=
    D
    -
    B^2
    \left(
       \partial_x^2
       +p\partial_x
       +q
    \right)
    \nonumber\\
    &=
    D-B^2L.
\end{align}

\section{First-jet constraint at a simple determinant zero}
\label{app:firstjet}

Assume that $A$ and $B$ do not have a common zero at $x_0$ and that
$D$ has a simple zero there.
Then
\begin{equation}
    \operatorname{rank}M(x_0)=1
\end{equation}
unless the complete matrix $M(x_0)$ vanishes.

Every physical transformed solution therefore satisfies one linear
relation between
\begin{equation}
    z(x_0)
\end{equation}
and
\begin{equation}
    z'(x_0).
\end{equation}
The second independent solution has not disappeared.
Its missing datum appears at the next derivative order.

For the derivative transformation,
\begin{equation}
    M
    =
    \begin{pmatrix}
       0&1\\
       -q&-p
    \end{pmatrix}.
\end{equation}
At a simple zero of $q$,
\begin{equation}
    z'(x_0)
    =
    -p(x_0)z(x_0).
\end{equation}
For a Schr\"odinger equation, $p=0$, so
\begin{equation}
    X'(x_0)=0.
\end{equation}
The next derivative then reconstructs the missing value of the original
master function through Theorem~\ref{thm:turning}.

\section{The Zerilli potential has one exterior maximum}
\label{app:Zpeak}

We give the algebra used in Sec.~\ref{sec:zerilli}.
Set
\begin{equation}
    x=\frac{r}{M},
    \qquad
    y=x-2,
\end{equation}
and
\begin{equation}
    n
    =
    \frac{(\ell-1)(\ell+2)}{2}.
\end{equation}
For $\ell\ge2$ we have
\begin{equation}
    n\ge2.
\end{equation}

After removing a positive rational factor, the sign of
$\dd V_Z/\dd y$ is the opposite of the polynomial
\begin{align}
 P_n(y)
 ={}&
 2n^3(n+1)y^5
 +
 n^3(14n+23)y^4
 \nonumber\\
 &+
 n^2(32n^2+74n+39)y^3
 \nonumber\\
 &+
 n(16n^3+52n^2+108n+99)y^2
 \nonumber\\
 &+
 (-32n^4-104n^3-36n^2+126n+81)y
 \nonumber\\
 &-
 (32n^4+128n^3+192n^2+144n+54).
 \label{eq:Pn}
\end{align}

For $n\ge2$, the first four coefficients are positive.
The coefficient of $y$ is negative because
\begin{equation}
    32n^4
    +
    104n^3
    +
    36n^2
    -
    126n
    -
    81
    >0.
\end{equation}
The constant coefficient is also negative.

The signs in descending order are therefore
\begin{equation}
    +,+,+,+,-,-.
\end{equation}
There is exactly one sign change.
Descartes' rule of signs implies that $P_n$ has exactly one positive
root.

Since
\begin{equation}
    P_n(0)<0
\end{equation}
and
\begin{equation}
    P_n(y)>0
\end{equation}
for sufficiently large $y$, the Zerilli potential first increases and
then decreases.
It therefore has exactly one exterior maximum.

\section{Distributional form of the transformed source}
\label{app:distribution}

Write
\begin{equation}
    \mathcal S
    =
    B\partial_x+H,
\end{equation}
where
\begin{equation}
    H
    =
    A+2B'
    -
    B\frac{D'}{D}.
\end{equation}

For a delta source at $s$,
\begin{align}
    \mathcal S_x\delta(x-s)
    ={}&
    B(x)\delta'(x-s)
    +H(x)\delta(x-s)
    \nonumber\\
    ={}&
    B(s)\delta'(x-s)
    +\left[H(s)-B'(s)\right]\delta(x-s)
    \nonumber\\
    ={}&
    B(s)\delta'(x-s)
    \nonumber\\
    &+
    \left[
       A+B'-B\frac{D'}{D}
    \right]_{s}
    \delta(x-s).
    \label{eq:distributionalsource}
\end{align}

The coefficient multiplying $\delta$ is exactly the zeroth-order
coefficient of the formal adjoint
\begin{equation}
    \mathcal S^\dagger
    =
    -B\partial_x
    +
    A+B'
    -
    B\frac{D'}{D}.
\end{equation}

This is the distributional form used in the Green-kernel calculation.

\section*{Conflict of interest}

The author declares no conflicts of interest.

\section*{Ethics statement}

This work did not involve human participants, animals, personal data, or other research requiring ethical approval.

\section*{Funding}

The author received no specific funding for this work

\section*{Data availability}

No external data sets were used in this work.  All numerical values
reported in Sec.~\ref{sec:numerics} were generated from the equations,
parameters, source function, integration interval, and solver tolerances
stated in the text.  No separate research data set was created.

\end{document}